\renewcommand{\cite}{\citet}
\newtheorem{thm}{Theorem}[section]
\newtheorem{lemma}{Lemma}[section]
\newtheorem{remark}{Remark}[section]
\newtheorem{cor}{Corollary}[section]
\newtheorem{proposition}{Proposition}[section]
\numberwithin{equation}{section} 
\newcommand{\pv}{\pi^*}
\newcommand{\E}{\mathbb{E}}
\newcommand{\R}{\mathbb{R}}
\renewcommand{\P}{\mathbb{P}}
\newcommand{\la}{\lambda}
\newcommand{\noi}{\noindent}
\newcommand{\ML}{m}
\title[Optimal Investment Strategy to Minimize Occupation Time]{Optimal Investment Strategy to Minimize Occupation Time}
\author{Erhan Bayraktar }
\address[Erhan Bayraktar]{Department of
  Mathematics, University of Michigan, Ann Arbor, MI 48109}
\email{erhan@umich.edu}
\thanks{E. Bayraktar is supported in part by the National Science Foundation, under grant DMS-0604491. }
\author{Virginia R. Young}
\address[Virginia R. Young]{Department of
  Mathematics, University of Michigan, Ann Arbor, MI 48109}
\email{vryoung@umich.edu}
\keywords{Occupation time, optimal investment, stochastic control, free-boundary problem.}
\begin{document}

\begin{abstract}
We find the optimal investment strategy to minimize the expected time that an individual's wealth stays below zero, the so-called {\it occupation time}.  The individual consumes at a constant rate and invests in a Black-Scholes financial market consisting of one riskless and one risky asset, with the risky asset's price process following a geometric Brownian motion.  We also consider an extension of this problem by penalizing the occupation time for the degree to which wealth is negative.

\end{abstract}
\maketitle

\section{Introduction}

We find the optimal investment strategy to minimize the expected time that an individual's wealth remains negative during that person's life.  The time the wealth process spends below a given level (0, in this paper) is known as the {\it occupation time}.  \cite{Akahori}, \cite{Takacs}, \cite{DoneyYor}, and \cite{Pechtl} find the distribution of the occupation time for Brownian motion with drift.  \cite{Linetsky} and \cite{Hugonnier} derive pricing formulas of various types of derivatives based on the occupation time.  To our knowledge, ours is the first study in controlling the occupation time.

We were motivated to study this problem in order to give individuals advice for investing after they become bankrupt.  One way to do this is to set an arbitrarily (large) negative ruin level and apply the investment strategy given by minimizing the probability of hitting that ruin level.  \cite{by3} show that the optimal investment strategy is independent of the ruin level.  Moreover, they show that if an individual were to minimize the expectation of a non-increasing function of minimum wealth, of which probability of ruin is an example, then the optimal investment strategy is independent of that function of wealth.

Another way to address the problem of how to invest when wealth is negative is to minimize the expected occupation time below zero during one's life.  When minimizing the probability of lifetime ruin, the ``game'' ends as soon as ruin occurs.  However, even after ruin occurs, the individual is still alive and requires an investment strategy.  In this paper, we consider the problem of minimizing the expected length of time that one's wealth is negative.

Our objective is to give investment advice to retirees.  This is driven by the fact that retired Americans have an ever increasing risk of having their living expenses exceed their financial resources.  This risk is due to the increasing longevity of individuals and inadequate retirement savings.  Moreover, individuals have to shoulder more financial risk now because of the continuing shift  in pension plans towards defined contribution.  To determine an investment policy, the individual needs to select an optimization criterion. The most common approach is to choose a utility function and  determine a policy that maximizes oneÕs expected discounted utility of consumption and bequest.  However, when maximizing lifetime utility of consumption and bequest under the commonly used utility functions, the optimal investment strategy does not depend on the hazard rate.  In giving investment advice to retirees, one expects that the advice should depend on the likelihood of dying (and she should be more conservative if this likelihood is higher), and our optimization criterion leads to such advice.  
Also, maximizing utility  is arguably subjective, while minimizing the probability of lifetime ruin or minimizing the expected occupation time might prove easier for retirees to understand.

The layout of the paper is as follows.  In Section \ref{sec:occ-time}, we present and solve the problem of minimizing expected occupation time.  In Section \ref{sec:fin-model}, we present the financial market in which the individual can invest, and we formulate the problem of minimizing expected occupation time.  In Section \ref{sec:verf}, we provide a verification theorem for the minimum expected occupation time, which we use to obtain the value function in Section \ref{sec:FBP} via a related free-boundary problem.  In Section \ref{sec:prop}, we present properties of the minimum occupation time and the corresponding optimal investment strategy.  In Section \ref{sec:exten}, we discuss an extension of minimizing expected occupation time.

\section{Minimizing Expected Occupation Time}\label{sec:occ-time}

In Section \ref{sec:fin-model}, we describe the financial market in which the individual can invest her wealth, and we formulate the problem of minimizing expected occupation time.  We also provide a verification theorem for the expected occupation time in Section \ref{sec:verf}; then, in  Section \ref{sec:FBP}, we construct the minimum expected occupation time as the convex dual of a solution of a related free-boundary problem.

\subsection{Financial model}\label{sec:fin-model}

We consider an individual aged with future lifetime described by the random variable $\tau_d$.  Suppose $\tau_d$ is an exponential random variable with parameter $\la$, also referred to as the hazard rate; in particular, $\E(\tau_d) = 1/\la$.

We assume that the individual consumes wealth at a constant {\it net} rate of $c$; this rate might be given in real or nominal units.  We say that the rate $c$ is a net rate because it is the rate of consumption offset by any income. One can interpret $c$ as the minimum net consumption level below which the individual cannot (or will not) reduce her consumption further; therefore, the minimum expected occupation time that we compute gives a lower bound for the expected occupation time under any consumption function bounded below by $c$.

The individual can invest in a riskless asset, which earns interest at the rate $r \ge 0$.  Also, she can invest in a risky asset whose price satisfies
\begin{equation}
dS_t = \mu S_t dt + \sigma S_t dB_t, \quad S_0 = S > 0,
\end{equation}
in which $\mu > r$, $\sigma > 0$, and $B$ is a standard Brownian motion with respect to a filtration $\mathbb{F} = \{ \mathcal{F}_t \}_{t \geq 0}$ of a probability space $(\Omega, \mathcal{F}, \P)$.  We assume that $B$ is independent of $\tau_d$, the random time of death of the individual.  If $c$ is given as a real rate of consumption (that is, after inflation), then we also express $r$ and $\mu$ as real rates.

Let $\pi_t$ denote the amount invested in the risky asset at time $t$, and let $\pi$ denote the investment strategy  $\{\pi_t\}_{t \geq 0}$.  We say that a strategy $\pi$ is {\it admissible} if the process $\pi$ is adapted to the filtration $\mathbb{F}$ and if $\pi$ satisfies the condition $\int_0^t \pi_s^2 \, ds < \infty$, almost surely, for all $t \ge 0$.  The wealth dynamics of the individual for a given admissible strategy $\pi$ are given by
\begin{equation}\label{eq:wealth}
dW_t = (r W_t + (\mu - r) \pi_t - c) dt + \sigma \pi_t dB_t, \quad
W_0 = w \ge 0.
\end{equation}

By {\it lifetime occupation time}, we mean the length of time that the individual's wealth is negative, before she dies.  One could also consider the time that wealth is below some arbitrary level, not necessarily $0$, but for ease of presentation, we choose the level to be $0$.  Define the process $A = \{A_t\}_{t \ge 0}$ by
\begin{equation}\label{eq:A}
A_t = A_0 + \int_0^t {\bf 1}_{\{W_s < 0\}} \, ds, \quad A_0 = a \ge 0.
\end{equation}
Thus, we wish to minimize $\E^{w,a} \left(A_{\tau_d} \right)$, in which $\E^{w, a}$ denotes expectation conditional on $W_0 = w$ and $A_0 = a$.  By allowing $A_0$ to be non-zero, we allow the individual to have been in bankruptcy for a period of time prior to the beginning of the game.

There is a problem with the goal of minimizing $\E^{w,a} \left(A_{\tau_d} \right)$.  Indeed, $0 \le \E^{w,a} \left(A_{\tau_d} \right) \le a + \E(\tau_d) = a + 1/\la$, and we expect the minimum $\E^{w, a} \left(A_{\tau_d} \right)$ to be a convex, non-increasing function of initial wealth $w$.  However, there is no bounded, convex, non-increasing function defined on the reals, other than a constant function.  Therefore, we modify the problem as follows:  Define the minimum wealth process $Z = \{Z_t\}_{t \ge 0}$ by
\begin{equation}
Z_t = \min_{0 \le s \le t} W_s.
\end{equation}
For $L > 0$, define the hitting time $\tau_L = \inf\{ t \ge 0: W_t \le -L \}$, and define the value function $M_L$ by
\begin{equation}\label{eq:M-L}
M_L(w, a) = \inf_\pi \E^{w, a} \left[ A_{\tau_d} \, {\bf 1}_{\{Z_{\tau_d} > -L \}} + \left( A_{\tau_L} + \frac{1}{\la} \right)  {\bf 1}_{\{Z_{\tau_d} \le -L \}} \right].
\end{equation}
Here, we take the infimum over admissible investment strategies.

For large values of $L > 0$, the control problem associated with $M_L$ approximates the problem of minimizing lifetime occupation time.  Indeed, if lifetime wealth stays above $-L$, which is quite likely for $L$ large, then the payoff is the occupation time.  If wealth falls below $-L$, then we suppose that the individual will have negative wealth for the rest of her life, and we end the game, with payoff equal to the occupation time up to the hitting time $\tau_L$ plus $\E(\tau_d) = 1/\la$.

Note that for $w \le -L$, $M_L(w, a) = a + 1/\la$, and for $w \ge c/r$, $M_L(w, a) = a$.  The latter holds because if $w \ge c/r$, then the individual can place all her wealth in the riskless asset and wealth will never go below $w \ge c/r$, much less reach 0.  Therefore, it remains for us to determine $M_L$ on $(-L, c/r)$.  The verification theorem in the next section will help us with that goal.

\subsection{Verification theorem for $M_L$}\label{sec:verf}

\begin{thm}\label{thm:verf}
Suppose $m: \R \to \R$ is a bounded, continuous function that satisfies the following conditions:
\item{$(i)$} $m$ is convex on $(-L, c/r)$ and lies in $C^2((-L,c/r)),$ except at $0;$
\item{$(ii)$} $m \in C^1(\R);$
\item{$(iii)$} $m(w) = 1/\la$ for all $w \le -L;$
\item{$(iv)$} $m(w) = 0$ for $w \ge c/r;$
\item{$(v)$} $m$ solves the following Hamilton-Jacobi-Bellman $($HJB$)$ equation on $(-L, c/r):$
\begin{equation}\label{eq:HJB}
\la m(w) = {\bf 1}_{\{w < 0\}} + (rw - c) m'(w) + \min_{\pi} \left[(\mu-r)\pi m'(w) + \frac{1}{2}\sigma^2 \pi^{2} m''(w) \right].
\end{equation}
Then, the value function $M_L$ on $\R$ defined by $(\ref{eq:M-L})$ is given by
\begin{equation}\label{eq:verf}
M_L(w, a) = m(w) + a,
\end{equation}
and the optimal investment strategy $\pi^*$ on $(-L, c/r)$ is given in feedback form by 
\begin{equation}\label{eq:pi}
\pi^*_t = - \frac{\mu - r}{\sigma} \, \frac{m'(W^*_t)}{m''(W^*_t)},
\end{equation}
in which $W^*$ is the optimally controlled wealth.
\end{thm}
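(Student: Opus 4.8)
The plan is to carry out a standard stochastic-control verification argument: apply It\^o's formula to a discounted functional of the controlled state, use the HJB equation $(\ref{eq:HJB})$ to read off a (sub)martingale property, and close with optional stopping. The one genuinely delicate point, discussed at the end, is the admissibility of the feedback strategy $(\ref{eq:pi})$.

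As a preliminary step I would rewrite $(\ref{eq:M-L})$ so that the generator in $(\ref{eq:HJB})$ appears. Since wealth has continuous paths, $\{Z_{\tau_d}\le -L\}=\{\tau_L\le\tau_d\}$ and $W_{\tau_L}=-L$ on $\{\tau_L<\infty\}$; conditioning on $\F_\infty$ and using that $\tau_d$ is independent of $\F_\infty$ with $\tau_d\sim\mathrm{Exp}(\la)$ (so $\P(\tau_d\ge t\mid\F_\infty)=e^{-\la t}$ and $\E(\tau_d)=1/\la$) gives
\[
M_L(w,a)=\inf_\pi\E^{w,a}\!\left[\int_0^{\tau_L}\la e^{-\la s}A_s\,ds+e^{-\la\tau_L}\Big(A_{\tau_L}+\tfrac1\la\Big){\bf 1}_{\{\tau_L<\infty\}}\right].
\]
For an admissible $\pi$, with $W$ as in $(\ref{eq:wealth})$ and $A$ as in $(\ref{eq:A})$, I would then work with the process
\[
G_t:=\int_0^t\la e^{-\la s}A_s\,ds+e^{-\la t}\big(m(W_t)+A_t\big),
\]
noting that $G_0=m(w)+a$ and, by $(iii)$ and $W_{\tau_L}=-L$, that $G_{\tau_L}$ equals the bracketed integrand above on $\{\tau_L<\infty\}$, while on $\{\tau_L=\infty\}$ the term $e^{-\la t}(m(W_t)+A_t)\to 0$ because $m$ is bounded and $A_t\le a+t$.

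Applying It\^o's formula to $e^{-\la t}m(W_t)$ --- which is legitimate since $m\in C^1(\R)$ by $(ii)$ and is $C^2$ off $0$, so the It\^o--Tanaka--Meyer formula carries no local-time term at $0$ (the first derivative is continuous) and involves $m''$ only a.e.; a first localization at $\tau_L^n:=\inf\{t:W_t\le -L+1/n\}$ takes care of any blow-up of $m''$ near the endpoints --- and substituting $(\ref{eq:wealth})$, the drift of $G_t$ equals $e^{-\la t}$ times
\[
-\la m(W_t)+{\bf 1}_{\{W_t<0\}}+(rW_t-c)m'(W_t)+(\mu-r)\pi_t m'(W_t)+\tfrac12\sigma^2\pi_t^2 m''(W_t).
\]
By $(\ref{eq:HJB})$ on $(-L,c/r)$ (and by the identity it trivially reduces to for $w\le -L$ and $w\ge c/r$, where $m$ is constant and $m'=m''=0$ by $(iii)$--$(iv)$), together with $m''\ge 0$ from the convexity in $(i)$, this is $(\mu-r)\pi_t m'(W_t)+\tfrac12\sigma^2\pi_t^2 m''(W_t)-\min_\pi\!\big[(\mu-r)\pi\, m'(W_t)+\tfrac12\sigma^2\pi^2 m''(W_t)\big]\ge 0$. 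Since $m'$ is continuous on $\R$ and supported in the compact set $[-L,c/r]$, hence bounded, the local-martingale part of $G$ is a true local martingale; localizing it, taking expectations, and passing to the limit under the bound $|G_{t\wedge\tau_L}|\le 2a+\tfrac1\la\big(1+\tfrac1e\big)+\|m\|_\infty$ (uniform in $t,\pi$, so dominated convergence applies both in the localization and as $t\to\infty$) yields $\E^{w,a}[G_{\tau_L}]\ge G_0=m(w)+a$; as $\E^{w,a}[G_{\tau_L}]$ is exactly the bracketed expression in the reformulation, $M_L(w,a)\ge m(w)+a$.

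Taking instead $\pi=\pi^*$ from $(\ref{eq:pi})$, which attains the minimum in $(\ref{eq:HJB})$ pointwise, makes the drift of $G$ vanish, so $G^*$ is a bounded local martingale --- hence a uniformly integrable martingale --- and the same limits give $\E^{w,a}[G^*_{\tau_L}]=m(w)+a$, i.e.\ the payoff under $\pi^*$ equals $m(w)+a$. Combined with the previous paragraph this establishes $M_L(w,a)=m(w)+a$ on $\R$ (matching the values $a$ and $a+1/\la$ noted before the theorem, which are $(iii)$--$(iv)$) and the optimality of $\pi^*$. The step I expect to be the real obstacle is showing that $\pi^*$ is admissible: substituting $(\ref{eq:pi})$ into $(\ref{eq:wealth})$ must give an SDE on $(-L,c/r)$ admitting a solution with $\int_0^t(\pi^*_s)^2\,ds<\infty$, which requires $m''>0$ there (forced, in fact, by $(i)$ and $(v)$ wherever $m'\neq 0$) and good control of $m'/m''$ near the kink $0$ and near $-L$ and $c/r$. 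Such estimates are not available from the hypotheses alone; they will be obtained in Section~\ref{sec:FBP}, where $m$ is built explicitly as the convex dual of the solution of a free-boundary problem, so the verification is completed there. The remaining items --- the precise localizing sequence and the dominated-convergence bookkeeping --- are routine.
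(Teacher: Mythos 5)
Your argument is correct and reaches the same conclusion by the same overall verification logic (It\^o's formula, the HJB inequality turning the relevant process into a submartingale, equality for the minimizer, optional stopping/dominated convergence), but the technical implementation differs from the paper's in two ways. First, the paper handles the random death time by adjoining an independent Poisson process and a coffin state $\infty$ with $m(\infty)=0$, so that the compensator of the killing produces the $-\la m$ term; you instead condition on $\F_\infty$ and fold $\tau_d$ into an explicit discount factor $e^{-\la t}$, working with $G_t=\int_0^t\la e^{-\la s}A_s\,ds+e^{-\la t}\bigl(m(W_t)+A_t\bigr)$. The two devices are equivalent (your reformulation of $(\ref{eq:M-L})$ is exactly the identity the paper records later, at the start of Section~\ref{sec:exten}), but yours avoids the jump-measure bookkeeping and, incidentally, applies verbatim to arbitrary adapted admissible $\pi$ rather than only to Markov feedback controls as the paper's proof is nominally phrased. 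Second, the paper stops the process at $\tau=\tau_d\wedge\tau_{c/r}\wedge\tau_L$ and uses the boundary values $m(c/r)=0$, $m(-L)=1/\la$ at the exit; you stop only at $\tau_L$ and observe that the HJB holds trivially (with equality) on $\{w\le -L\}$ and $\{w\ge c/r\}$ where $m$ is constant, which buys you a slightly cleaner submartingale statement at the cost of having to note that $m'$ is continuous at $-L$ and $c/r$ so no local-time terms arise there either. Your explicit uniform bound on $G_{t\wedge\tau_L}$ substitutes for the paper's appeal to $\sup_t\E[m(W^\pi_{t\wedge\tau})+A^\pi_{t\wedge\tau}]<\infty$ and optional sampling. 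Finally, you flag the admissibility and well-posedness of the feedback SDE for $\pi^*$ as the genuine remaining obstacle, deferred to the explicit construction in Section~\ref{sec:FBP}; the paper's proof passes over this point in silence, so your treatment is, if anything, the more candid of the two.
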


\begin{proof}
Assume that $m$ satisfies the conditions specified in the statement of this theorem. Let $N$ denote a Poisson process with rate $\lambda$ that is independent of the standard Brownian motion $B$ driving the wealth process. The occurrence of a jump in the Poisson process represents the death of the individual.

Let $\pi: \R \to \R$ be a function, and let $W^\pi$, $A^\pi$, and $Z^\pi$ denote the wealth, occupation time, and minimum wealth, respectively, when the individual follows the investment policy $\pi_t = \pi(W_t)$.  Assume that this investment policy is admissible.

Define $\bar \R = \R \cup \{\infty\}$ to be the one-point compactification of $\R$.  The point $\infty$ is  the ``coffin state."  The wealth process is killed (and sent to the coffin) as soon as the Poisson process jumps (that is, when the individual dies), and we assign $W^\pi_{\tau_d} = \infty$; recall that $\tau_d$ is the random time of death.  Given a function $g$ on $\R$, we extend it to $\bar \R$ by $g(\infty) = 0$.

Define the stopping time $\tau = \tau_d \wedge \tau_{c/r} \wedge \tau_L$, in which $\tau_{c/r} = \inf\{t \ge 0: W_t \ge c/r \}$; recall that $\tau_L = \inf \{t \ge 0: W_t \le -L\}$.  Also, define $\tau_n = \inf \{t \ge 0:  \int_0^t \pi^2_s \, ds \ge n \}$.

If $W_0 = w \ge c/r$, then the individual can invest her wealth in the riskless asset and finance the cost of her consumption, which is no greater than $\int_{0}^{\infty} c \, e^{-rt} dt = c/r$. Therefore, with this strategy, her wealth will be at least $c/r$ at the time of her death.  In other words, if $W_0 = w \ge c/r$, then  $Z_{\tau_d} \ge c/r$, which implies that $M_L(w, a) = a$ for $w \ge c/r$.  Similarly, if $W_0 = w \le - L$, then $Z_{\tau_d} \le -L$, which implies that $M_L(w, a) = a + 1/\la$ for $w \le -L$.  From these observations it follows that
\begin{equation}\label{eq:newM-L}
M_L(w, a) = \inf_\pi \E^{w, a} \left[ A_{\tau} \, {\bf 1}_{\{Z_{\tau} > -L \}} + \left( A_{\tau} + \frac{1}{\la} \right) {\bf 1}_{\{Z_{\tau} \le -L \}} \right] = \inf_\pi \E^{w, a} \left[ A_{\tau} + \frac{1}{\la} {\bf 1}_{\{Z_{\tau} \le -L \}} \right].
\end{equation}

By applying a generalized It\^o's formula for convex functions (see Theorem 6.22, page 214 of \cite{kn:karat}) to $m(w) + a$ for $w \in (-L, c/r)$ and $a \ge 0$, we have
\begin{equation}\label{eq:Ito}
\begin{split}
&m(W^\pi_{t \wedge \tau \wedge \tau_n}) + A^\pi_{t \wedge \tau \wedge \tau_n} 
\\&= m(w) + a +  \int_0^{t \wedge \tau \wedge \tau_n}  \, dA_s 
 \quad +\int_0^{t \wedge \tau \wedge \tau_n}  \left( (r W^\pi_s + (\mu-r) \pi_s - c) \, m'(W^\pi_s) + \frac{1}{2} \,\sigma^2 \, \pi_s^2 \, m''(W^\pi_s) \right)ds \\
& \quad - \la \int_0^{t \wedge \tau \wedge \tau_n}  m(W^\pi_s) ds + \int_0^{t \wedge \tau \wedge \tau_n} m'(W^\pi_s) \, \sigma \, \pi_s \, dB_s - \int_0^{t \wedge \tau \wedge \tau_n} m(W^\pi_s) d(N_s - \la s). 
\end{split}
\end{equation}
Here, we used the fact if the Poisson processes jumps at time $s$ the change in the value of the process $m(W_t^{\pi})+A^{\pi}_t$ at time $s$ is $m(W_s^{\pi})-m(W_{s-}^{\pi})=-m(W_{s-}^{\pi})$, in which the last equality follows since $m(\infty)=0$. 
In the derivation of \eqref{eq:Ito} we also used the fact that $\int_0^{t}m(W_{s-}^{\pi}) ds=\int_0^{t}m(W_{s}^{\pi}) ds$ which follows thanks to the fact that the Lebesgue measure is diffuse.

It follows from the definition of $A$ in (\ref{eq:A}) that $dA_s = {\bf 1}_{\{ w < 0\}} \, ds$.  The definition of $\tau_n$ implies that the expectation of the fourth integral is 0.  Moreover, the expectation of the fifth integral is zero since $m$ is bounded; see, for example, \cite{MR82m:60058}.

Thus, we have
\begin{align}\label{eq:3.9}
& \E^{w, a} \left[ m \left(W^\pi_{t \wedge \tau \wedge \tau_n} \right) + A^\pi_{t \wedge \tau \wedge \tau_n} \right] = m(w) + a - \E^{w, A} \left[ \int_0^{t \wedge \tau \wedge \tau_n} \la \, m(W^\pi_s) \, ds \right]  \nonumber \\
& + \E^{w, A} \left[ \int_0^{t \wedge \tau \wedge \tau_n} \left( {\bf 1}_{\{W^\pi_s < 0\}} + (rW^\pi_s + (\mu-r) \pi_s - c) \, m'(W^\pi_s) + \frac{1}{2} \,\sigma^2 \, \pi_s^2 \, m''(W^\pi_s) \right) ds \right] \nonumber \\
& \geq m(w) + a, 
\end{align}
in which the inequality follows from assumption (v) of the theorem.  Because $m$ is bounded and because $A_t \le a + \tau_d$ almost surely, it follows from the dominated convergence theorem that
\begin{equation}\label{eq:sub-mart}
\E^{w, a} \left[ m(W^\pi_{t \wedge \tau}) + A^\pi_{t \wedge \tau} \right] \geq m(w) +  a.
\end{equation}
Inequality (\ref{eq:sub-mart}) shows that $\left\{ m(W^\pi_{t \wedge \tau}) + A^\pi_{t \wedge \tau} \right\}_{t \ge 0}$ is a sub-martingale for any admissible strategy $\pi$.

>From $m(c/r) = 0 = m(W^\pi_{\tau_d})$ and $m(-L) = 1/\la$, it follows that
\begin{align}\label{eq:m}
m(W^\pi_\tau) +  A^\pi_\tau &= \left( m \left(W^\pi_{\tau_d} \right) + A^\pi_{\tau_d} \right) {\bf 1}_{\left\{\tau_d < \left( \tau_{c/r} \wedge \tau_L \right) \right\}}
+  \left( m(c/r) + A^\pi_{\tau_{c/r}} \right) {\bf 1}_{ \left\{\tau_{c/r} < \left( \tau_d \wedge \tau_L \right) \right\}} \nonumber \\
& \quad + \left( m (-L) + A^\pi_{\tau_L} \right) {\bf 1}_{ \left\{\tau_L < \left( \tau_d \wedge \tau_{c/r} \right) \right\}}  \nonumber \\
& = A^\pi_\tau + \frac{1}{\la} \, {\bf 1}_{ \left\{ Z^\pi_\tau \le -L \right\}}.
\end{align}
By taking the expectation of both sides of (\ref{eq:m}), we obtain
\begin{equation}\label{eq:3.12}
\E^{w, a} \left[A^\pi_\tau + \frac{1}{\la} \, {\bf 1}_{ \left\{ Z^\pi_\tau \le -L \right\}} \right] = \E^{w, a} \left[ m(W^\pi_{\tau}) + A^\pi_{\tau} \right] \ge m(w) + a.
\end{equation}
The inequality in (\ref{eq:3.12}) follows from an application of the optional sampling theorem because $\{ m(W^\pi_{t \wedge \tau}) + A^\pi_{t \wedge \tau} \}_{t \ge 0}$ is a sub-martingale and $\sup_{t \ge 0} \E^{w, a} [m(W^\pi_{t \wedge\tau}) + A^\pi_{t \wedge \tau}] < \infty$; see Theorem 3.15, page 17 and Theorem 3.22, page 19 of \cite{kn:karat}.  By taking the infimum in (\ref{eq:3.12}) over all admissible investment strategies, we obtain
\begin{equation}\label{eq:ineq}
\inf_\pi \E^{w, a} \left[A^\pi_\tau + \frac{1}{\la} \, {\bf 1}_{ \left\{ Z^\pi_\tau \le -L \right\}} \right] \ge m(w) + a.
\end{equation}
Now, by (\ref{eq:newM-L}), the left-hand side of (\ref{eq:ineq}) equals $M_L(w, a)$; thus,
$M_L(w, a) \ge m(w) + a$.

If the individual investor follows a strategy $\pv$ that minimizes the right-hand side of (\ref{eq:HJB}), then (\ref{eq:3.9}) is satisfied with equality, and applying the dominated convergence theorem yields
\begin{equation}
\E^{w, a} [m(W^{\pv}_{t \wedge \tau}) + A^{\pv}_{t \wedge \tau}] = m(w) + A,
\end{equation}
which implies that $\{m(W^{\pv}_{t \wedge \tau}) + A^{\pv}_{t \wedge \tau} \}_{t \ge 0}$ is a martingale.  By following the same line of argument as above, we obtain
\begin{equation}
M_L(w, a) = m(w) + a,
\end{equation}
which demonstrates that (\ref{eq:verf}) holds and that $\pv$ in (\ref{eq:pi}) is an optimal investment strategy for wealth lying in $(-L, c/r)$.
\end{proof}

\subsection{A related free-boundary problem}\label{sec:FBP}

Next, we introduce a free-boundary problem (FBP) whose concave solution is the dual, via the Legendre transform, of $M_L(w, a) - a$.  In the first subsection, we solve the FBP, and in the subsection following, we show that the convex dual of the solution of the FBP, indeed, equals $M_L(w, a) - a$.

Consider the following free-boundary problem (FBP) on $[0, y_L]$:
\begin{equation}\label{FBP}
\begin{cases}
\la \, \hat m(y) =  (\la - r) \, y \, \hat m'(y) + \delta \, y^2 \, \hat m''(y) + c \, y + {\bf 1}_{\{y > y_0\}} \; \hbox{ on } \; C = (0, y_L), \\
\hat m(0) = 0 \; \hbox{ and }  \hat m'(y_0) = 0,\\
\hat m(y_L) = 1/\la - L \, y_L \; \hbox{ and } \hat m'(y_L) = - L,
\end{cases}
\end{equation}
in which $0 \le y_0 \le y_L$ are the free boundaries and $\delta = {1 \over 2} \left( {\mu - r \over \sigma} \right)^2$.

\subsubsection{Solving the FPB \eqref{FBP}}

To solve the FBP in \eqref{FBP}, we consider the problem on the two domains:  (1) $0 \le y \le y_0$, and (2) $y_0 < y \le y_L$.  After solving the FBP on each domain separately, we will impose value matching at $y = y_0$ to determine $y_0$.

First, consider $0 \le  y \le y_0$.  On this domain, $\hat m$ solves
\begin{equation}\label{FBP1}
\begin{cases}
\la \, \hat m(y) =  (\la - r) \, y \, \hat m'(y) + \delta \, y^2 \, \hat m''(y) + c \, y, \\
\hat m(0) = 0 \; \hbox{ and } \; \hat m'(y_0) = 0.
\end{cases}
\end{equation}
The general solution of the ODE in (\ref{FBP1}) is of the form
\begin{equation}
\hat m(y) = D_1 \, y^{B_1} + D_2 \, y^{B_2} + {c \over r} \, y,
\end{equation}
in which $D_1$ and $D_2$ are constants to be determined, and $B_1$ and $B_2$ are given by
\begin{equation}\label{eq:B1}
B_1 = {1 \over 2 \delta} \left[ (r - \la + \delta) + \sqrt{(r - \la + \delta)^2 + 4 \delta \la} \right] > 1,
\end{equation}
and
\begin{equation}\label{eq:B2}
B_2 = {1 \over 2 \delta} \left[ (r - \la + \delta) - \sqrt{(r - \la + \delta)^2 + 4 \delta \la} \right] < 0.
\end{equation}

The boundary condition at $y = 0$ implies that $D_2 = 0$.  The boundary condition at $y = y_0$ implies that
\begin{equation}
0 = \hat m'(y_0) = D_1 \, B_1 \, y_0^{B_1 - 1} + {c \over r},
\end{equation}
which gives us a relationship between $D_1$ and $y_0$, specifically,
\begin{equation}\label{eq:D1y0}
D_1 = - {c \over r B_1} \, y_0^{1 - B_1}.
\end{equation}
We have, therefore, shown that on $[0, y_0]$, $\hat m$ is given by
\begin{equation}\label{eq:hatm1}
\hat m(y) =  {c \over r} \, y \left[ 1 - {1 \over B_1} \left( {y \over y_0} \right)^{B_1 - 1} \right].
\end{equation}

Next, consider $y_0 < y \le y_L$.  On this domain, $\hat m$ solves
\begin{equation}\label{FBP2}
\begin{cases}
\la \, \hat m(y) =  (\la - r) \, y \, \hat m'(y) + \delta \, y^2 \, \hat m''(y) + c \, y + 1, \\
\hat m(y_L) = 1/\la - L \, y_L, \; \hat m'(y_L) = - L, \; \hbox{ and } \; \hat m'(y_0) = 0.
\end{cases}
\end{equation}
The general solution of the ODE in (\ref{FBP2}) is of the form
\begin{equation}\label{eq:ODE2}
\hat m(y) = d_1 \, y^{B_1} + d_2 \, y^{B_2} + {c \over r} \, y + {1 \over \la},
\end{equation}
in which $d_1$ and $d_2$ are constants to be determined, and $B_1$ and $B_2$ are given by (\ref{eq:B1}) and (\ref{eq:B2}), respectively.

The boundary conditions at $y = y_L$ imply that
\begin{equation}
{1 \over \la} - L y_L = \hat m(y_L) = d_1 \, y_L^{B_1} + d_2 \, y_L^{B_2} + {c \over r} \, y_L + {1 \over \la},
\end{equation}
and
\begin{equation}
-L = \hat m'(y_L) = d_1 \, B_1 \, y_L^{B_1 - 1} + d_2 \, B_2 \, y_L^{B_2 - 1} + {c \over r}.
\end{equation}
Solve these equations for $d_1$ and $d_2$ in terms of $y_L$ to obtain
\begin{equation}\label{eq:d1}
d_1 = - {1 - B_2 \over B_1 - B_2} \, y_L^{1- B_1} \left( {c \over r} + L \right),
\end{equation}
and
\begin{equation}\label{eq:d2}
d_2 = - {B_1 - 1 \over B_1 - B_2} \, y_L^{1- B_2} \left( {c \over r} + L \right).
\end{equation}
Substitute these expressions for $d_1$ and $d_2$ into (\ref{eq:ODE2}) to obtain, for $y_0 < y \le y_L$,
\begin{equation}\label{eq:hatm2}
\hat m(y) = y \left[ {c \over r} - {1 - B_2 \over B_1 - B_2} \, \left( {c \over r} + L \right) \, \left( {y \over y_L} \right)^{B_1-1} - {B_1 - 1 \over B_1 - B_2} \, \left( {c \over r} + L \right) \, \left( {y \over y_L} \right)^{B_2-1} \right] + {1 \over \la}.
\end{equation}
For later reference, we also have
\begin{equation}\label{eq:hatm-der}
\hat m'(y) =  {c \over r} - {B_1(1 - B_2) \over B_1 - B_2} \, \left( {c \over r} + L \right) \, \left( {y \over y_L} \right)^{B_1-1} - {(B_1 - 1) B_2 \over B_1 - B_2} \, \left( {c \over r} + L \right) \, \left( {y \over y_L} \right)^{B_2-1}.
\end{equation}

The boundary condition at $y = y_0$, namely $\hat m'(y_0) = 0$, gives us the following equation for the ratio $y_0/y_L \in (0, 1)$:
\begin{equation}\label{eq:y0yL}
{c \over c + rL} = {B_1 (1 - B_2) \over B_1 - B_2} \left( {y_0 \over y_L} \right)^{B_1 - 1} + {(B_1 - 1) B_2 \over B_1 - B_2} \left( {y_0 \over y_L} \right)^{B_2 - 1}.
\end{equation} 
To see that (\ref{eq:y0yL}) has a unique solution $y_0/y_L$ in $(0, 1)$, note that (a) as $y_0/y_L$ approaches 0, the right-hand side of this equation approaches $- \infty$; (b) when $y_0/y_L = 1$, the right-hand side equals $1 > c/(c + rL)$; and (c) the right-hand side is strictly increasing with respect to $y_0/y_L$.

Finally, given $y_0/y_L$, we can obtain $y_0$ from $\hat m(y_0-) = \hat m(y_0+)$:
\begin{equation}
{c \over r} \, {B_1 - 1 \over B_1} \, y_0  = y_0 \left[ {c \over r} - {1 - B_2 \over B_1 - B_2} \, \left( {c \over r} + L \right) \, \left( {y_0 \over y_L} \right)^{B_1-1} - {B_1 - 1 \over B_1 - B_2} \, \left( {c \over r} + L \right) \, \left( {y_0 \over y_L} \right)^{B_2-1} \right] + {1 \over \la},
\end{equation}
or equivalently,
\begin{equation}\label{eq:y0}
y_0 = {1 \over \la} \left[ - {1 \over B_1} \, {c \over r} + {1 - B_2 \over B_1 - B_2} \, \left( {c \over r} + L \right) \, \left( {y_0 \over y_L} \right)^{B_1-1} + {B_1 - 1 \over B_1 - B_2} \, \left( {c \over r} + L \right) \, \left( {y_0 \over y_L} \right)^{B_2-1}  \right]^{-1}.
\end{equation}

Thus, we have proved the following proposition:

\begin{proposition}\label{prop:FBP}
The solution $\hat m$ of the free-boundary problem in $\eqref{FBP}$ is given by
\begin{equation}
\hat m(y) =
\begin{cases}
{c \over r} \, y \left[ 1 - {1 \over B_1} \left( {y \over y_0} \right)^{B_1 - 1} \right], & 0 \le y \le y_0; \\
y \left[ {c \over r} - {1 - B_2 \over B_1 - B_2} \, \left( {c \over r} + L \right) \, \left( {y \over y_L} \right)^{B_1-1} - {B_1 - 1 \over B_1 - B_2} \, \left( {c \over r} + L \right) \, \left( {y \over y_L} \right)^{B_2-1} \right] + {1 \over \la}, & y_0 < y \le y_L.
\end{cases}
\end{equation}
The free boundaries $0 \le y_0 \le y_L$ are given by solving $\eqref{eq:y0yL}$ for $y_0/y_L \in (0, 1)$, then by obtaining $y_0$ from \eqref{eq:y0} and $y_L$ from the relationship $y_L = y_0/(y_0/y_L)$.
\end{proposition}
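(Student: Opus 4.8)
The plan is to solve the linear ODE in \eqref{FBP} separately on the two subintervals $[0,y_0]$ and $(y_0,y_L]$, glue the pieces together, and then show the free boundaries $y_0,y_L$ are uniquely determined by the remaining conditions. The starting observation is that the ODE is of Cauchy--Euler (equidimensional) type: substituting $\hat m(y)=y^{B}$ into the homogeneous operator $\la\,\hat m - (\la-r)\,y\,\hat m' - \delta\,y^2\,\hat m''$ produces the indicial equation $\delta B^2 + (\la - r - \delta)B - \la = 0$, whose two roots are exactly $B_1$ and $B_2$ of \eqref{eq:B1}--\eqref{eq:B2}; since the product of the roots is $-\la/\delta<0$ and the indicial polynomial equals $-r<0$ at $B=1$, we get $B_2<0<1<B_1$. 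A particular solution for the forcing term $c\,y$ is the linear function $(c/r)\,y$ (from $rk=c$ when $\hat m=ky$), and for the forcing $c\,y+1$ it is $(c/r)\,y + 1/\la$.

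On $[0,y_0]$ the general solution is $D_1 y^{B_1} + D_2 y^{B_2} + (c/r)y$; the condition $\hat m(0)=0$ forces $D_2=0$ (because $B_2<0$), and $\hat m'(y_0)=0$ gives $D_1 = -\tfrac{c}{rB_1}\,y_0^{1-B_1}$, which is \eqref{eq:hatm1}. On $(y_0,y_L]$ the general solution is $d_1 y^{B_1} + d_2 y^{B_2} + (c/r)y + 1/\la$; the two conditions at $y_L$ form a linear system for $(d_1,d_2)$ that is nonsingular since $B_1\ne B_2$, with solution \eqref{eq:d1}--\eqref{eq:d2}, yielding \eqref{eq:hatm2} and its derivative \eqref{eq:hatm-der}. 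Imposing the one remaining condition $\hat m'(y_0)=0$ on this outer piece turns \eqref{eq:hatm-der} into equation \eqref{eq:y0yL} for the ratio $t:=y_0/y_L$.

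The step that needs real care is the existence and uniqueness of $t\in(0,1)$ solving \eqref{eq:y0yL} (which then forces $0<y_0<y_L$). Write the right-hand side of \eqref{eq:y0yL} as $\varphi(t) = \alpha_1 t^{B_1-1} + \alpha_2 t^{B_2-1}$ with $\alpha_1 = \tfrac{B_1(1-B_2)}{B_1-B_2} > 0$ and $\alpha_2 = \tfrac{(B_1-1)B_2}{B_1-B_2} < 0$, the signs following from $B_2<0<1<B_1$ (hence $B_1-B_2>0$). Since $B_1-1>0$, the term $\alpha_1 t^{B_1-1}$ is increasing; since $B_2-1<0$ and $\alpha_2<0$, the term $\alpha_2 t^{B_2-1}$ is also increasing; so $\varphi$ is strictly increasing on $(0,1)$. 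As $t\to 0^+$ the second term drives $\varphi(t)\to-\infty$, while $\varphi(1)=\alpha_1+\alpha_2 = \tfrac{B_1(1-B_2)+(B_1-1)B_2}{B_1-B_2} = 1$. Because $\tfrac{c}{c+rL}\in(0,1)$, there is a unique such $t$.

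Finally, with $t$ fixed, value matching $\hat m(y_0-)=\hat m(y_0+)$ — equating \eqref{eq:hatm1} at $y_0$ with \eqref{eq:hatm2} at $y_0$ — is a single linear equation in $y_0$ whose solution is \eqref{eq:y0}, and then $y_L = y_0/t$. One checks directly that the resulting piecewise function satisfies every line of \eqref{FBP}: the ODE holds on each subinterval by construction, the boundary conditions at $0$ and $y_L$ and the condition $\hat m'(y_0)=0$ were imposed, and $\hat m$ is automatically $C^1$ at $y_0$ because both pieces have vanishing derivative there (so only $\hat m''$ jumps, consistent with the indicator $\mathbf 1_{\{y>y_0\}}$ in the ODE). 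I expect the sign bookkeeping in the monotonicity argument for $\varphi$, together with keeping track of which boundary condition pins down which constant, to be the only non-routine part; everything else is the substitution and linear algebra already displayed above.
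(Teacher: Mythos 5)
Your proposal is correct and follows essentially the same route as the paper: solve the Cauchy--Euler ODE on $[0,y_0]$ and $(y_0,y_L]$ separately, use $\hat m(0)=0$ and the conditions at $y_L$ to pin down the constants, derive \eqref{eq:y0yL} from $\hat m'(y_0)=0$ and establish a unique root $y_0/y_L\in(0,1)$ by monotonicity and the limits at $0$ and $1$, then recover $y_0$ from value matching. Your additional details (the sign of the indicial polynomial at $B=1$ giving $B_2<0<1<B_1$, the explicit monotonicity of both terms of $\varphi$, and the automatic $C^1$ matching at $y_0$) are all correct and only make explicit what the paper asserts.
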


Note that the solution $\hat m$ of the FBP is bounded above by $u_L$ and that the two curves are tangent at $y = 0$ and $y = y_L$, in which $u_L$ is given by
\begin{equation}\label{eq:u-L}
u_L(y) = \min \left( {c \over r} \, y, \; {1 \over \la} - L \, y \right).
\end{equation}
We will use this observation when we consider the limit as $L \to \infty$.

The following lemma shows that $\hat m$ is concave, a property we use in the next section.

\begin{lemma}\label{lem:concave}
The solution $\hat m$ of the FBP in $\eqref{FBP}$ is concave on $[0, y_L]$.
\end{lemma}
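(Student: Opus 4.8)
The plan is to establish concavity by the most direct route: differentiate the explicit formula for $\hat m$ from Proposition \ref{prop:FBP} twice on each of the two subintervals $(0, y_0)$ and $(y_0, y_L)$, show that $\hat m'' < 0$ on each, and then glue the two pieces together using the $C^1$-fit at the free boundary $y_0$ that was built into the construction. Recall that by Proposition \ref{prop:FBP} the free boundaries satisfy $0 < y_0 < y_L$, so both subintervals are nondegenerate.

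First I would treat $0 \le y \le y_0$. Writing \eqref{eq:hatm1} as $\hat m(y) = \frac cr\, y - \frac{c}{rB_1}\, y_0^{1-B_1}\, y^{B_1}$, two differentiations give $\hat m''(y) = -\frac{c(B_1-1)}{r}\, y_0^{1-B_1}\, y^{B_1-2}$, which is strictly negative on $(0, y_0)$ because $c, r, y_0 > 0$ and $B_1 > 1$ by \eqref{eq:B1}. Next I would treat $y_0 < y \le y_L$. Differentiating \eqref{eq:hatm-der} once more yields
\[
\hat m''(y) = -\,\frac{B_1(B_1-1)(1-B_2)}{B_1-B_2}\Bigl(\tfrac cr + L\Bigr)\, y_L^{1-B_1}\, y^{B_1-2} \;-\; \frac{(B_1-1)B_2(B_2-1)}{B_1-B_2}\Bigl(\tfrac cr + L\Bigr)\, y_L^{1-B_2}\, y^{B_2-2}.
\]
Since $B_1 > 1 > 0 > B_2$ from \eqref{eq:B1}--\eqref{eq:B2}, we have $B_1 - B_2 > 0$, $1 - B_2 > 0$, $B_1(B_1-1) > 0$, and $B_2(B_2-1) > 0$ (a product of two negative numbers); together with $\frac cr + L > 0$ and the positivity of every power of $y$ and $y_L$ appearing, both terms are strictly negative, so $\hat m'' < 0$ on $(y_0, y_L)$ as well.

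Finally I would assemble the conclusion. The function $\hat m$ is continuous on $[0, y_L]$, and it is in fact $C^1$ across $y_0$: value matching $\hat m(y_0-) = \hat m(y_0+)$ is precisely the relation used to pin down $y_0$ in \eqref{eq:y0}, and $\hat m'(y_0-) = 0 = \hat m'(y_0+)$ because the condition $\hat m'(y_0) = 0$ appears on both sides, in \eqref{FBP1} and in \eqref{FBP2}. Hence $\hat m'$ is continuous on $(0, y_L)$ and strictly decreasing on each of $(0, y_0)$ and $(y_0, y_L)$, which forces it to be strictly decreasing on all of $(0, y_L)$; combined with the continuity of $\hat m$ up to the endpoints, this gives (strict) concavity of $\hat m$ on $[0, y_L]$. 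The only step needing any care is the gluing at the free boundary $y_0$, but the $C^1$-fit there is exactly what the matching conditions that determine $y_0$ provide, so I do not anticipate a genuine obstacle.
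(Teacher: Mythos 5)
Your proof is correct and follows essentially the same route as the paper's: compute $\hat m''$ explicitly on each of $(0,y_0)$ and $(y_0,y_L)$ from the closed-form expressions, check its sign using $B_1>1>0>B_2$, and glue at $y_0$ via the matching conditions $\hat m(y_0-)=\hat m(y_0+)$ and $\hat m'(y_0)=0$. The only cosmetic difference is that the paper also records the signs of $\hat m'$ on the two pieces, whereas you argue directly from the continuity of $\hat m'$ across $y_0$; both handle the gluing equally well.
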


\begin{proof}
On $[0, y_0)$, $\hat m$ is given by \eqref{eq:hatm1}, with first derivative $\hat m'(y) = (c/r) \left[ 1 - (y/y_0)^{B_1 - 1} \right] > 0$ and second derivative $\hat m''(y) = -(c/r) (B_1 - 1) y^{B_1 - 2}/y_0^{B_1 - 1} < 0$.  On $(y_0, y_L]$, $\hat m$ is given by \eqref{eq:hatm2}.  One can show that the first derivative of $\hat m$ is negative by using the fact that the right-hand side of \eqref{eq:y0yL} is strictly increasing with respect to $y_0/y_L$.  Also, the second derivative
\begin{equation}\label{eq:hatm-der2}
\hat m''(y) = - {1 \over y_L} {(B_1 - 1) (1 - B_2) \over B_1 - B_2}  \left( {c \over r} + L \right) \left[ B_1 \left( {y \over y_L} \right)^{B_1 - 2} - B_2 \left( {y \over y_L} \right)^{B_2 - 2} \right] < 0.
\end{equation}
Finally, $\hat m'(y_0) = 0$; thus, we have shown that $\hat m$ is concave on $[0, y_L]$.
\end{proof}

In the next section, we show that $\hat m$ is intimately connected with the minimum occupation time $M_L$.

\subsubsection{Relation between the FBP (\ref{FBP}) and the minimum occupation time}

In this section, we show that the Legendre transform (see, for example, \cite{kn:karat2}) of the solution of the FBP (\ref{FBP}) is, in fact, the minimum occupation time $M_L(w, 0) = M_L(w, a)) - a$ for $w \in (-L, c/r)$.

To this end, first note that $\hat m$ is concave on $[0, y_L]$; thus, we can define its convex dual via the Legendre transform.  Indeed, for $w \ge -L$, define
\begin{equation}\label{Leg}
\ML(w) = \max_{y \ge 0} \left( \hat m(y) - wy \right).
\end{equation}
We will show that $\ML$ satisfies the conditions of Theorem \ref{thm:verf} and thereby show that $m(w) = M_L(w, a) - a = M_L(w, 0)$ for $w \in (-L, c/r)$.

For $w \in (-L, c/r)$, the critical value $y^*$ solves the equation $\hat m'(y) - w = 0$; thus, $y^* = I(w)$, in which $I$ is the inverse of $\hat m'$.  Note that $y_0 = I(0)$ and that $y > y_0$ corresponds to $w < 0$.  It follows that
\begin{equation}\label{3.2}
\ML(w) = \hat m \left( I(w) \right) - w I(w).
\end{equation}
Expression (\ref{3.2}) implies that
\begin{equation}\label{3.3}
\ML'(w) = \hat m'\left( I(w) \right) I'(w) - I(w) - w I'(w) = w I'(w) - I(w) - w I'(w) = - I(w).
\end{equation}
\noi Thus, $y^* = I(w) = - \ML'(w)$. Note that from (\ref{3.3}), we have
\begin{equation}\label{3.5}
\ML''(w) = -I'(w) = - {1 \over \hat m'' \left( I(w) \right)}.
\end{equation}
It is also straightforward to show the following relationship, which we use in the next section when we investigate properties of $M_L$: 
\begin{equation}\label{eq:third-der}
\ML'''(w) = {\hat m''' \left( I(w) \right) \over \left( \hat m'' \left( I(w) \right) \right)^3}.
\end{equation}

\begin{thm}\label{thm:dual}
Let $\ML$ be given by $(\ref{Leg}),$ in which $\hat m$ is given by Proposition $\ref{prop:FBP}$.  Then, the minimum occupation time $M_L$ defined in $(\ref{eq:M-L})$ is related to $\ML$ by $M_L(w, a) = \ML(w) + a$ for $(w, a) \in (-L, c/r) \times \R^+$.
\end{thm}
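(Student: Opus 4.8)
\noi The plan is to verify that the Legendre transform $\ML$ of Proposition~\ref{prop:FBP}, suitably extended, satisfies the hypotheses $(i)$--$(v)$ of the verification theorem; the conclusion of Theorem~\ref{thm:verf} then yields $M_L(w,a)=\ML(w)+a$ on $(-L,c/r)\times\R^+$ directly. Concretely, set $m(w):=\ML(w)$ for $w\in(-L,c/r)$, $m(w):=1/\la$ for $w\le -L$, and $m(w):=0$ for $w\ge c/r$, and check the five conditions for this $m$.

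The structural input is that the Legendre transform here is classical. By Lemma~\ref{lem:concave} and its proof, $\hat m''<0$ on $(0,y_0)\cup(y_0,y_L)$ while $\hat m'$ is continuous on $[0,y_L]$ with $\hat m'(0)=c/r$, $\hat m'(y_0)=0$, and $\hat m'(y_L)=-L$; hence $\hat m'$ is a continuous, strictly decreasing bijection of $[0,y_L]$ onto $[-L,c/r]$, so its inverse $I$ is continuous and strictly decreasing with $I(c/r)=0$, $I(0)=y_0$, $I(-L)=y_L$, and $I\in C^1$ on $(-L,c/r)\setminus\{0\}$. Then \eqref{3.2}--\eqref{3.5} give $\ML'=-I$ and $\ML''=-1/\hat m''(I(\cdot))>0$ on $(-L,c/r)$. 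Condition $(i)$ follows: $\ML$ is convex on $(-L,c/r)$ and $C^2$ there except at $w=0$, the unique point where $\hat m''$, hence $\ML''$, jumps. For $(ii)$, $\ML'=-I$ is continuous on $(-L,c/r)$ (in particular through $w=0$, since $\hat m\in C^1$ at $y_0$), the $C^1$ pasting at $w=c/r$ holds as $\ML'(c/r^-)=-I(c/r)=0$, and the one-sided limits match the constants, $\lim_{w\uparrow c/r}\ML(w)=\hat m(0)=0$ and $\lim_{w\downarrow -L}\ML(w)=\hat m(y_L)+Ly_L=(1/\la-Ly_L)+Ly_L=1/\la$; together with the trivial boundedness of $\ML$ on $[-L,c/r]$ ($\hat m$ bounded on the compact $[0,y_L]$, $I$ valued there) this gives continuity and boundedness of $m$ on $\R$. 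Conditions $(iii)$ and $(iv)$ hold by the definition of the extension.

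The substantive step is $(v)$: that $m=\ML$ solves \eqref{eq:HJB} on $(-L,c/r)$. Since $\ML''>0$, the inner minimization in \eqref{eq:HJB} is attained at $\pi=-(\mu-r)\ML'/(\sigma^2\ML'')$ with value $-\delta(\ML')^2/\ML''$, so \eqref{eq:HJB} reads $\la\,\ML(w)=\mathbf 1_{\{w<0\}}+(rw-c)\ML'(w)-\delta(\ML'(w))^2/\ML''(w)$. Writing $y=I(w)$, so $w=\hat m'(y)$, $\ML(w)=\hat m(y)-y\hat m'(y)$, $\ML'(w)=-y$, $\ML''(w)=-1/\hat m''(y)$, and using $\mathbf 1_{\{w<0\}}=\mathbf 1_{\{y>y_0\}}$ (because $\hat m'$ is strictly decreasing and vanishes at $y_0$), substitution and simplification turn this into $\la\,\hat m(y)=(\la-r)y\hat m'(y)+\delta y^2\hat m''(y)+cy+\mathbf 1_{\{y>y_0\}}$, which is exactly the ODE of the FBP \eqref{FBP} satisfied by $\hat m$. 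Hence $(v)$ holds, and Theorem~\ref{thm:verf} gives $M_L(w,a)=m(w)+a=\ML(w)+a$ on $(-L,c/r)\times\R^+$.

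I expect the only delicate points to be $(a)$ pinning down the regularity of $\ML$ at the kink $w=0$ and the one-sided behavior at the endpoints $w=-L$ and $w=c/r$ — in particular that $\ML\in C^1$ through $w=0$ with $\ML''$ merely jumping there, which is precisely what the generalized It\^o argument behind Theorem~\ref{thm:verf} requires of the wealth process before it is stopped at $\tau_L\wedge\tau_{c/r}$ — and $(b)$ the bookkeeping in the HJB-to-ODE reduction, especially the identification $\mathbf 1_{\{w<0\}}=\mathbf 1_{\{y>y_0\}}$ and the signs introduced by the Legendre transform. Neither is a genuine obstacle, since the FBP \eqref{FBP} was reverse-engineered from the HJB \eqref{eq:HJB} via exactly this duality.
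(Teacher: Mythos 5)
Your proposal is correct and follows essentially the same route as the paper: transform the FBP ODE into the HJB equation via the substitution $y=I(w)=-\ML'(w)$, identify the boundary values $\ML(c/r)=0$ and $\ML(-L)=1/\la$ from the free-boundary conditions, and invoke Theorem~\ref{thm:verf}. Your write-up is in fact somewhat more careful than the paper's, since you explicitly verify the regularity conditions $(i)$--$(iv)$ (convexity, the $C^1$ pasting through $w=0$ and at the endpoints, boundedness) that the paper leaves implicit.
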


\begin{proof}
We first find the boundary-value problem (BVP) that $\ML$ solves given that $\hat m$ solves (\ref{FBP}). In the ODE for $\hat m$ in (\ref{FBP}), let $y = I(w) = -\ML'(w)$ to obtain
\begin{equation}
\la \, \hat m\left( I(w) \right) =  (\la - r) I(w) \hat m'\left( I(w) \right) + \delta I^2(w) \hat m''\left( I(w) \right) + c I(w) + {\bf 1}_{\{w < 0\}}.
\end{equation}
\noi Rewrite this equation in terms of $\ML$ to get
\begin{equation}\label{fbp1}
\la \left( \ML(w) - w \ML'(w) \right) = - (\la - r) \ML'(w) w + \delta {\left( \ML''(w) \right)^2 \over - \ML''(w)} - c \ML'(w) + {\bf 1}_{\{w < 0\}},
\end{equation}
\noi or equivalently,
\begin{equation}\label{3.7}
\la \, \ML(w) = {\bf 1}_{\{w < 0\}} +  (rw - c) \ML'(w) - \delta { \left( \ML''(w) \right)^2 \over \ML''(w)}.
\end{equation}
Note that because $\ML$ is convex, $- \delta \, \left( \ML''(w) \right)^2/ \ML''(w)$ can be written as \hfill \break $\min_\pi \left[ (\mu - r) \pi \ML'(w) + {1 \over 2} \sigma^2 \pi^2 \ML''(w) \right]$, which is similar to the minimization term in (\ref{eq:HJB}).

Next, consider the boundary and terminal conditions in (\ref{FBP}).  First, the boundary condition at $y = 0$,
namely $\hat m(0) = 0$, and $\hat m'(0) = c/r$ from \eqref{eq:hatm1} imply that the corresponding dual value of $w$ is $c/r$ and that
\begin{equation}\label{fbp3}
\ML(c/r) = 0.
\end{equation}
Similarly, the boundary conditions at $y = y_L$, namely $\hat m(y_L) = 1/\la - L y_L$ and $\hat m(y_L) = -L$, imply that the corresponding dual value of $w$ is $-L$ and that
\begin{equation}\label{fbp4}
\ML(-L) = {1 \over \la}.
\end{equation}

Because $\ML$ satisfies the conditions given in Theorem \ref{thm:verf}, we have shown that the minimum occupation time $M_L$  for $(w, a) \in (-L, c/r) \times \R^+$ is given by $M_L(w, a) = \ML(w) + a$.  
\end{proof}

Thus, we can obtain the minimum expected occupation time $M_L$ as described in the following corollary to Theorem \ref{thm:dual}.

\begin{cor}\label{cor:dual}
For $(w, a) \in (-L, c/r) \times \R^+,$ the minimum expected occupation time $M_L$ is given by $M_L(w, a) = m(w) + a,$ in which $m$ is given by
\begin{equation}\label{eq:M-L-via-dual}
m(w) = 
\begin{cases}
\beta_L \left( {c \over r} - w \right)^p, & 0 \le w < c/r; \\
y \left[ \left( {c \over r} - w \right) - {1 - B_2 \over B_1 - B_2} \, \left( {c \over r} + L \right) \, \left( {y \over y_L} \right)^{B_1-1} - {B_1 - 1 \over B_1 - B_2} \, \left( {c \over r} + L \right) \, \left( {y \over y_L} \right)^{B_2-1} \right] + {1 \over \la}, & -L < w < 0.
\end{cases}
\end{equation}
In the first expression of $\eqref{eq:M-L-via-dual},$ the constant $p$ equals  $B_1/(B_1 - 1) > 1$.  In the second expression of $\eqref{eq:M-L-via-dual},$ $y$ is the solution of $w = \hat m'(y),$ in which $\hat m'$ is given by $\eqref{eq:hatm-der}$.  Finally, $\beta_L$ is given by continuity of $m$ and $m'$ at $w = 0$ and solves
\begin{equation}\label{eq:y0betaL}
y_0 = \beta_L \, p \, \left( {c \over r} \right)^{p-1}.
\end{equation}
Here, $y_0$ and $y_L$ are as in Proposition~\ref{prop:FBP}.
\end{cor}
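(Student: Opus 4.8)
The plan is to extract $\ML$ explicitly from Theorem~\ref{thm:dual} together with the formula for $\hat m$ in Proposition~\ref{prop:FBP}. By Theorem~\ref{thm:dual}, $M_L(w,a) = \ML(w) + a$ on $(-L, c/r) \times \R^+$, and by \eqref{3.2}--\eqref{3.3} we have $\ML(w) = \hat m(I(w)) - w\,I(w)$ with $I = (\hat m')^{-1}$ and $I(w) = -\ML'(w)$. Since $\hat m$ is concave with $\hat m'(y_0) = 0$ (Lemma~\ref{lem:concave}), the sign dictionary is $w > 0 \iff 0 < I(w) < y_0$ and $w < 0 \iff y_0 < I(w) < y_L$, so the two branches of $\hat m$ in Proposition~\ref{prop:FBP} produce the two branches of \eqref{eq:M-L-via-dual}. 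Everything then reduces to inverting $\hat m'$ on each piece and substituting.

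On $-L < w < 0$: here $I(w) = y \in (y_0, y_L)$ is the implicit solution of $w = \hat m'(y)$ with $\hat m'$ given by \eqref{eq:hatm-der}, and $\hat m(y)$ is given by \eqref{eq:hatm2}. Substituting into $\ML(w) = \hat m(y) - wy$ and absorbing $-wy$ into the leading $(c/r)\,y$ term of \eqref{eq:hatm2} yields exactly the second line of \eqref{eq:M-L-via-dual}; no further simplification is possible (or needed), because $y$ is only available implicitly.

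On $0 \le w < c/r$: here $\hat m$ is given by \eqref{eq:hatm1}, so $w = \hat m'(y) = (c/r)\bigl[1 - (y/y_0)^{B_1-1}\bigr]$ inverts explicitly to $I(w) = y_0\bigl[(c/r - w)/(c/r)\bigr]^{1/(B_1-1)}$. Plugging this into $\ML(w) = \hat m(I(w)) - w\,I(w)$ and collecting the powers of $(c/r - w)$ collapses the expression to the single term $\beta_L\,(c/r - w)^p$ with $p = B_1/(B_1-1) > 1$, which is the first line of \eqref{eq:M-L-via-dual}. Equivalently, one checks directly that $\beta_L(c/r - w)^p$ solves the dual equation \eqref{3.7} with the indicator absent and the boundary data $\ML(c/r) = 0$, $\ML'(c/r) = -I(c/r) = 0$: substitution reduces \eqref{3.7} to $\la = rp - \delta\,p/(p-1)$, which is the characteristic relation \eqref{eq:B1} rewritten as $B_1 = p/(p-1)$.

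It remains to identify $\beta_L$. The function $\ML$ is $C^1$ at $w = 0$: it is continuous there because $\hat m$ was constructed with value matching at $y_0$, and $\ML' = -I$ is continuous with $I(0) = y_0$. Matching the derivatives of the two branches at $w = 0$, namely $\ML'(0^+) = -\beta_L\,p\,(c/r)^{p-1}$ and $\ML'(0^-) = -I(0) = -y_0$, gives \eqref{eq:y0betaL}. The only delicate points are bookkeeping ones — confirming that the two branches are glued precisely at $I(w) = y_0$, i.e.\ at $w = 0$, and carrying the exponent arithmetic that turns the $[0, c/r)$ piece into a pure power — and neither is a genuine obstacle, since the analytic content was already supplied by Theorem~\ref{thm:dual}.
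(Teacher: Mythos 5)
Your argument is correct and is exactly the route the paper intends: the corollary is presented as an immediate consequence of Theorem~\ref{thm:dual}, obtained by evaluating $\ML(w)=\hat m(I(w))-wI(w)$ branch by branch using the explicit $\hat m$ from Proposition~\ref{prop:FBP}, with the $[0,c/r)$ branch collapsing to $\beta_L(c/r-w)^p$ and $\beta_L$ fixed by $\ML'(0)=-I(0)=-y_0$. Your exponent arithmetic and the identification $\la = rp - \delta p/(p-1)$ with the characteristic equation for $B_1$ both check out, so nothing is missing.
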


\section{Properties of the Minimum Occupation Time and Optimal Investment Strategy}\label{sec:prop}

In this section, we address the following questions.

\begin{enumerate}
\item{} How does the optimal investment strategy $\pi^*_L(w)$ given in \eqref{eq:pi} compare with the optimal investment strategy when minimizing the probability of lifetime ruin, as determined in \cite{by3}, namely ${\mu - r \over \sigma^2} \, {c/r - w \over p - 1}$, in which $p = B_1/(B_1 - 1)$?

\item{} How does $\pi^*_L(w)$ vary with respect to $w$?

\item{} How do $\pi^*_L$ and $M_L$ change as $L \to \infty$?
\end{enumerate}

\subsection{Optimal investment strategies}

Let $\pi^*_0$ denote the optimal investment strategy when minimizing the probability of lifetime ruin.  Recall that this is the optimal investment strategy corresponding to any ruin level; that is, it is independent of the ruin level.  Then, we have the following proposition.

\begin{proposition}\label{prop:pi}
$\pi^*_L(w) = \pi^*_0(w)$ for $0 <  w < c/r,$ and $\pi^*_L(w) > \pi^*_0(w)$ for $-L < w < 0$.
\end{proposition}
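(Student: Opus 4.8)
\emph{Proof proposal.} The plan is to turn the comparison of the two feedback maps into a single scalar inequality. By the feedback formula \eqref{eq:pi} (equivalently, by minimizing the bracketed expression in \eqref{eq:HJB}), $\pi^*_L(w)$ equals a fixed positive constant times $-m'(w)/m''(w)$, where $m(w)=M_L(w,a)-a$ is the function of Corollary~\ref{cor:dual}; and $\pi^*_0(w)=\frac{\mu-r}{\sigma^2}\,\frac{c/r-w}{p-1}$ is that same positive constant times $(c/r-w)/(p-1)$, where $p=B_1/(B_1-1)$. Hence the proposition is equivalent to showing $-m'(w)/m''(w)=(c/r-w)/(p-1)$ for $0<w<c/r$ and $-m'(w)/m''(w)>(c/r-w)/(p-1)$ for $-L<w<0$. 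Throughout I would use the identity $p-1=1/(B_1-1)$, which is immediate from $p=B_1/(B_1-1)$.

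The regime $0<w<c/r$ is a one-line computation: Corollary~\ref{cor:dual} gives the closed form $m(w)=\beta_L(c/r-w)^p$, so $m'(w)=-\beta_L p (c/r-w)^{p-1}$ and $m''(w)=\beta_L p(p-1)(c/r-w)^{p-2}$, whence $-m'(w)/m''(w)=(c/r-w)/(p-1)$ identically. This step presents no difficulty; it simply records that on $[0,c/r)$ the value function has the same power form as the minimum-ruin-probability value function, so the two feedback maps coincide there.

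For $-L<w<0$ I would pass to the dual variable. Put $y=I(w)$, the inverse of $\hat m'$; since $\hat m'$ decreases from $c/r$ at $y=0$ to $-L$ at $y=y_L$ with $\hat m'(y_0)=0$ (Lemma~\ref{lem:concave} and Proposition~\ref{prop:FBP}), the range $w\in(-L,0)$ corresponds to $y\in(y_0,y_L)$. By \eqref{3.3} and \eqref{3.5}, $m'(w)=-y$ and $m''(w)=-1/\hat m''(y)$, so $-m'(w)/m''(w)=-y\,\hat m''(y)$; also $c/r-w=c/r-\hat m'(y)$. Multiplying the target inequality by $p-1>0$, it becomes
\[
G(y):=-(p-1)\,y\,\hat m''(y)+\hat m'(y)-\frac{c}{r}>0\qquad\text{for }y\in(y_0,y_L).
\]
Now I would substitute the explicit formulas \eqref{eq:hatm-der} for $\hat m'$ and \eqref{eq:hatm-der2} for $\hat m''$. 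Using $p-1=1/(B_1-1)$, the two terms proportional to $(y/y_L)^{B_1-1}$ cancel exactly, and the two terms proportional to $(y/y_L)^{B_2-1}$ combine — their coefficient collapses because $(1-B_2)+(B_1-1)=B_1-B_2$ — to leave
\[
G(y)=-B_2\Bigl(\tfrac{c}{r}+L\Bigr)\Bigl(\tfrac{y}{y_L}\Bigr)^{B_2-1}.
\]
Since $B_2<0$ by \eqref{eq:B2} and $c/r+L>0$, this is strictly positive on all of $(y_0,y_L)$, which finishes the proof.

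The one place that demands care is the algebraic simplification of $G$: it is short, but it works only because $p-1=1/(B_1-1)$ forces the $B_1$-power terms to cancel; with any other exponent one would not obtain a sign-definite expression. An alternative, avoiding the dual variable, would be to set $h(w):=(c/r-w)\,m''(w)+(p-1)\,m'(w)$ and show $h<0$ on $(-L,0)$ using $h(0^+)=0$ together with a first-order ODE for $h$ obtained by differentiating the transformed HJB equation \eqref{3.7} once; but the dual computation above is shorter and more transparent. It is worth noting that $G(y_0)>0$ while the equality holds as $w\to0^+$, so $\pi^*_L$ jumps upward across $w=0$, in agreement with the jump of $m''$ there.
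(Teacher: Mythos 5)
Your proposal is correct and follows essentially the same route as the paper: the power-form computation on $(0,c/r)$, then passing to the dual variable on $(-L,0)$ and substituting \eqref{eq:hatm-der}--\eqref{eq:hatm-der2} so that the $(y/y_L)^{B_1-1}$ terms cancel and the surviving $B_2$-term has a definite sign because $B_2<0$. The only cosmetic difference is that you collapse the residual terms into the single quantity $-B_2\left(\tfrac{c}{r}+L\right)(y/y_L)^{B_2-1}>0$, whereas the paper leaves them on opposite sides of inequality \eqref{eq:leverage}; the underlying algebra is identical.
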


\begin{proof}
For $0 < w < c/r$, Corollary \ref{cor:dual} gives us that $m(w) = \beta_L (c/r - w)^p$.  Then, from the expression in \eqref{eq:pi}, we obtain that $\pi^*_L(w) = {\mu - r \over \sigma^2} \, {c/r - w \over p - 1} = \pi^*_0(w)$.

For $-L < w < 0$,
\begin{equation}\label{eq:lev}
\pi^*_L(w) = - {\mu - r \over \sigma^2} \, {m'(w) \over m''(w)} > {\mu - r \over \sigma^2} \, {c/r - w \over p - 1}
\end{equation}
 if and only if
 \begin{equation}\label{eq:lev2}
 -y \, \hat m''(y) > (B_1 - 1) (c/r - \hat m'(y)),
 \end{equation}
 for $y_0 < y < y_L$, by substituting $w = \hat m'(y)$ into \eqref{eq:lev}.  By substituting for $\hat m'$ and $\hat m''$ from \eqref{eq:hatm-der} and \eqref{eq:hatm-der2}, respectively, into \eqref{eq:lev2} and simplifying the result, it becomes
\begin{equation}\label{eq:leverage}
-B_2 (1 - B_2) (y/y_L)^{B_2 - 1} > B_2 (B_1 - 1) (y/y_L)^{B_2 - 1}.
\end{equation}
The inequality in \eqref{eq:leverage} is true because the left-hand side is positive, while the right-hand side is negative.
\end{proof}

Proposition \ref{prop:pi} implies that if one seeks to minimize expected occupation time, then leveraging is {\it worse} for negative wealth than when minimizing the probability of lifetime ruin.

\subsection{$\pi^*_L$ as a function of $w$}

When minimizing the probability of lifetime ruin, the optimal investment strategy is a linear, decreasing function of $w$.  From the first part of Proposition \ref{prop:pi}, we know that the same is true for $\pi^*_L(w)$ for $0 < w < c/r$.  Therefore, we wish to see how $\pi^*_L(w)$ varies with respect to $w$ for $-L < w < 0$.

\begin{proposition}\label{prop:pi-mono}
$\pi^*_L(w)$ decreases on $(-L, 0)$ if and only if $B_1 (B_1 - 1) \left( {y_0 \over y_L} \right)^{B_1 - B_2} > - B_2 (1 - B_2$.  If $r < \la,$ then $\pi^*_L(w)$ increases on $(-L, 0)$.
\end{proposition}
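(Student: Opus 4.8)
\textit{Proof proposal.} The plan is to parametrize $\pi^*_L$ by the dual variable $y$ (via $w=\hat m'(y)$, exactly as in the proof of Proposition~\ref{prop:pi}) and reduce the monotonicity question to the sign of one derivative. First I would rewrite the feedback strategy: using $m'(w)=-I(w)$ from \eqref{3.3} and $m''(w)=-1/\hat m''\!\left(I(w)\right)$ from \eqref{3.5}, \eqref{eq:pi} becomes
\[
\pi^*_L(w)=-\frac{\mu-r}{\sigma^2}\,\frac{m'(w)}{m''(w)}=-\frac{\mu-r}{\sigma^2}\,I(w)\,\hat m''\!\left(I(w)\right).
\]
By Lemma~\ref{lem:concave}, $\hat m''<0$ on $(y_0,y_L)$, so $w\mapsto I(w)$ is a strictly decreasing bijection of $(-L,0)$ onto $(y_0,y_L)$ with $I(0)=y_0$, $I(-L)=y_L$ and $I'(w)=1/\hat m''\!\left(I(w)\right)<0$. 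Setting $\phi(y):=y\,\hat m''(y)$, one has $\pi^*_L(w)=-\frac{\mu-r}{\sigma^2}\,\phi\!\left(I(w)\right)$, hence $\frac{d}{dw}\pi^*_L(w)=-\frac{\mu-r}{\sigma^2}\,\phi'\!\left(I(w)\right)I'(w)$ has sign \emph{opposite} to that of $\phi'\!\left(I(w)\right)$, since $-\frac{\mu-r}{\sigma^2}<0$ and $I'<0$. So $\pi^*_L$ decreases on $(-L,0)$ iff $\phi'<0$ on $(y_0,y_L)$, and increases there iff $\phi'>0$.

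Next I would compute $\phi'$ from \eqref{eq:hatm-der2}. Writing $\kappa:=-\tfrac1{y_L}\,\tfrac{(B_1-1)(1-B_2)}{B_1-B_2}\!\left(\tfrac cr+L\right)<0$ (all factors positive since $B_1>1>0>B_2$), we have $\hat m''(y)=\kappa\bigl[B_1(y/y_L)^{B_1-2}-B_2(y/y_L)^{B_2-2}\bigr]$, so $\phi(y)=\kappa y_L\bigl[B_1(y/y_L)^{B_1-1}-B_2(y/y_L)^{B_2-1}\bigr]$ and
\[
\phi'(y)=\kappa\left[B_1(B_1-1)\left(\tfrac{y}{y_L}\right)^{B_1-2}-B_2(B_2-1)\left(\tfrac{y}{y_L}\right)^{B_2-2}\right].
\]
Since $\kappa<0$, $(y/y_L)^{B_2-2}>0$, and $B_2(B_2-1)>0$, dividing through shows $\phi'(y)<0$ iff $B_1(B_1-1)(y/y_L)^{B_1-B_2}>B_2(B_2-1)=-B_2(1-B_2)$. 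The left side is strictly increasing in $y$ (because $B_1-B_2>0$), so this holds throughout $(y_0,y_L)$ iff it holds at the left endpoint $y=y_0$, which is precisely $B_1(B_1-1)(y_0/y_L)^{B_1-B_2}>-B_2(1-B_2)$; this proves the first assertion. (At equality at $y_0$ one still has $\phi'<0$ on the open interval, so the condition is sharp up to the endpoint case, which I would note in passing.)

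For the second assertion I would use Vieta's relation for the roots in \eqref{eq:B1}–\eqref{eq:B2}: $B_1+B_2=(r-\la+\delta)/\delta=1+(r-\la)/\delta$, so $r<\la$ forces $B_1+B_2<1$, whence $B_1(B_1-1)-B_2(B_2-1)=(B_1-B_2)(B_1+B_2-1)<0$, i.e. $B_1(B_1-1)<-B_2(1-B_2)$. Since $B_1>1$ gives $B_1(B_1-1)>0$ and $(y/y_L)^{B_1-B_2}<1$ for $y<y_L$, we get $B_1(B_1-1)(y/y_L)^{B_1-B_2}<B_1(B_1-1)<-B_2(1-B_2)$ for every $y\in(y_0,y_L)$; hence $\phi'>0$ on $(y_0,y_L)$ and $\pi^*_L$ increases on $(-L,0)$. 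The only real obstacle here is sign bookkeeping — the orientation reversal from the Legendre transform ($I$ decreasing), the sign of $\kappa$, and the identity $B_2(B_2-1)=-B_2(1-B_2)>0$; once those are in hand everything reduces to monotonicity of a power function and the sum-of-roots formula. $\qquad\blacksquare$
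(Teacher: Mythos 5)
Your argument is correct and is essentially the paper's own: your condition $\phi'(y)<0$ with $\phi(y)=y\,\hat m''(y)$ is identical to the paper's inequality \eqref{eq:pi-decr2}, since $\phi'(y)=\hat m''(y)+y\,\hat m'''(y)$, and both routes reduce to \eqref{eq:pi-decr3} with the same endpoint checks at $y_0$ and $y_L$; your Vieta computation merely fills in the detail behind the paper's remark that $B_1(B_1-1)<-B_2(1-B_2)$ if and only if $r<\la$. One slip to fix: since $-\tfrac{\mu-r}{\sigma^2}<0$ and $I'<0$, their product is \emph{positive}, so $\tfrac{d}{dw}\pi^*_L(w)$ has the \emph{same} sign as $\phi'\!\left(I(w)\right)$, not the opposite --- fortunately the equivalence you then assert (decreasing iff $\phi'<0$) is the one that follows from the correct sign, so nothing downstream is affected.
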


\begin{proof}
>From the expression in \eqref{eq:pi}, we know that $\pi^*_L(w)$ decreases with respect to $w$ if and only if $-m'(w)/m''(w)$ decreases with respect to $w$, or equivalently by differentiating, if and only if
\begin{equation}\label{eq:pi-decr}
m'(w) m'''(w) < \left( m''(w) \right)^2.
\end{equation}
By setting $w = \hat m'(y)$, or equivalently $y = I(w) = -m'(w)$, and by using \eqref{3.5} and \eqref{eq:third-der}, inequality \eqref{eq:pi-decr} becomes
\begin{equation}\label{eq:pi-decr2}
y \, \hat m'''(y) < - \hat m''(y).
\end{equation}

>From \eqref{eq:hatm-der2}, we obtain that for $y_0 < y < y_L$,
\begin{equation}\label{eq:hatm-der3}
\hat m'''(y) = - {1 \over y_L^2} {(B_1 - 1) (1 - B_2) \over B_1 - B_2}  \left( {c \over r} + L \right) \left[ B_1(B_1 - 2) \left( {y \over y_L} \right)^{B_1 - 3} - B_2 (B_2 - 2) \left( {y \over y_L} \right)^{B_2 - 3} \right].
\end{equation}
By substituting for $\hat m''$ and $\hat m'''$ from \eqref{eq:hatm-der2} and \eqref{eq:hatm-der3}, respectively, into \eqref{eq:pi-decr2} and by simplifying the result, the inequality becomes
\begin{equation}\label{eq:pi-decr3}
B_1 (B_1 - 1) \left( {y \over y_L} \right)^{B_1 - B_2} + B_2 (1 - B_2) > 0.
\end{equation}
The inequality in \eqref{eq:pi-decr3} holds for all $y_0 < y < y_L$ if and only if it holds at $y = y_0$, and this completes the proof of the first part of the proposition.

By a similar line of reasoning, we learn that $\pi^*_L(w)$ increases on $(-L, 0)$ if and only if 
\begin{equation}\label{eq:pi-incr}
B_1 (B_1 - 1) \left( {y \over y_L} \right)^{B_1 - B_2} + B_2 (1 - B_2) < 0,
\end{equation}
for all $y_0 < y < y_L$, which occurs if and only if \eqref{eq:pi-incr} holds at $y = y_L$.  It is straightforward to show that $B_1 (B_1 - 1) < - B_2 (1 - B_2)$ if and only if $r < \la$.
\end{proof}

Proposition \ref{prop:pi-mono} shows us that if the rate of return on the riskless asset is low enough, then the individual will borrow more money to invest in the risky asset as wealth gets closer to zero.  In other words, because the borrowing rate is low, the individual can take on more debt (because debt is relatively cheap) in order to get wealth above zero.

\subsection{Varying $L$}

In this section, we examine how $M_L$ and $\pi^*_L$ change as $L$ increases.

\begin{proposition}\label{prop:pi-incr}
If $L_1 \le L_2,$ then $\pi^*_{L_1}(w) \le \pi^*_{L_2}(w)$ for $w \in (-L_1, c/r)$.
\end{proposition}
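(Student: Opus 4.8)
On $(0,c/r)$ the claimed inequality is actually an equality: there $\pi^*_{L_1}(w)=\pi^*_0(w)=\pi^*_{L_2}(w)$ by Proposition~\ref{prop:pi}. So the whole content lies in the region $-L_1<w<0$, and the plan is to fix such a $w$ and show that $\pi^*_L(w)$ is \emph{strictly increasing} in $L$ over all $L>-w$; this range contains both $L_1$ and $L_2$, since $-L_1<w<0$ forces $L_1>-w$. (By one-sided continuity the inequality then also holds at $w=0$, where $\pi^*_L$ has a jump.)

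First I would rewrite $\pi^*_L(w)$ in the dual/Legendre variables of Section~\ref{sec:FBP}. Fix $L>-w$, let $y=y(L)$ be the unique point of $(y_0,y_L)$ with $\hat m_L'(y)=w$ (unique because $\hat m_L$ is strictly concave there, Lemma~\ref{lem:concave}), and set $\xi=y/y_L$, where $y_0=y_0(L)$ and $y_L=y_L(L)$ are as in Proposition~\ref{prop:FBP}. From $m_L'(w)=-y$ and $m_L''(w)=-1/\hat m_L''(y)$ (see \eqref{3.3} and \eqref{3.5}), formula \eqref{eq:pi} gives $\pi^*_L(w)=-\tfrac{\mu-r}{\sigma}\,y\,\hat m_L''(y)$, and substituting \eqref{eq:hatm-der2} together with the rearrangement of \eqref{eq:hatm-der} yields
\begin{align*}
\pi^*_L(w)&=\frac{\mu-r}{\sigma}\cdot\frac{(B_1-1)(1-B_2)}{B_1-B_2}\;K\big[\,B_1\,\xi^{B_1-1}-B_2\,\xi^{B_2-1}\,\big],\\
\frac{c}{r}-w&=\frac{K}{B_1-B_2}\big[\,B_1(1-B_2)\,\xi^{B_1-1}+(B_1-1)B_2\,\xi^{B_2-1}\,\big],
\end{align*}
with $K:=c/r+L$ and $w$ held fixed. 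The $L$-dependence resides in $K$ and in $\xi=\xi(L)$ (pinned down by the second line), and the two are coupled.

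The key step is the change of variables $P:=K\xi^{B_1-1}$, $Q:=K\xi^{B_2-1}$ (note $B_1-1>0>B_2-1$), which decouples them: a one-line computation gives the monomial identity $Q^{\,B_1-1}P^{\,1-B_2}=K^{\,B_1-B_2}$, and this is now the only place $K$---hence $L$---enters. In the variables $(P,Q)$ the second display is affine, $B_1(1-B_2)P+(B_1-1)B_2\,Q=(B_1-B_2)(c/r-w)$, so $Q=\kappa(P-P_0)$ with $\kappa=\tfrac{B_1(1-B_2)}{-(B_1-1)B_2}>0$ and $P_0=\tfrac{(B_1-B_2)(c/r-w)}{B_1(1-B_2)}>0$, both independent of $L$; substituting this into the first display collapses it, by elementary algebra, to $\pi^*_L(w)=\tfrac{\mu-r}{\sigma}\tfrac{B_1(1-B_2)}{B_1-B_2}\big[(B_1-B_2)P-(1-B_2)P_0\big]$, which is strictly increasing in $P$. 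Finally, substituting $Q=\kappa(P-P_0)$ into the monomial identity gives $\kappa^{\,B_1-1}(P-P_0)^{\,B_1-1}P^{\,1-B_2}=K^{\,B_1-B_2}$, whose left-hand side is a strictly increasing bijection of $P$ from $(P_0,\infty)$ onto $(0,\infty)$ (both exponents $B_1-1$ and $1-B_2$ are positive) while its right-hand side increases in $K$; hence $P=P_L$ increases strictly in $K=c/r+L$, so $\pi^*_L(w)$ increases in $L$, which is what is wanted. Two points would need checking but are routine: the $P_L$ produced by the construction of $\hat m_L$ is automatically the unique root above, so no separate verification that $\xi(L)\in(y_0/y_L,1)$ is needed; and the ``affine with positive slope'' bookkeeping uses only $B_1>1$ and $B_2<0$. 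The only genuine obstacle is spotting the substitution: it must render \emph{both} the state relation and the formula for $\pi^*_L(w)$ affine in $(P,Q)$ while confining every remaining trace of $L$ to the single identity $Q^{B_1-1}P^{1-B_2}=K^{B_1-B_2}$; once that is arranged, the monotonicity in $L$ is transparent.
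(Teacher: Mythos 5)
Your proof is correct, and while it works in the same dual coordinates as the paper's, its key step is genuinely different. The paper also reduces to showing $\partial\pi^*_L(w)/\partial L>0$ on $(-L,0)$ via $\pi^*_L(w)\propto -y\,\hat m''(y)$ and the state relation $w=\hat m'(y)$, but it then proceeds by brute force: it implicitly differentiates $w=\hat m'(y)$ in $L$ to extract $\partial(y/y_L)/\partial L$ (equation \eqref{eq:der-yyL}), substitutes into the differentiated form of $-y\,\hat m''(y)$ (inequality \eqref{eq:ineq-pi}), and checks that the resulting expression --- a quadratic in $(y/y_L)^{B_1-B_2}$ whose top and bottom coefficients cancel, leaving $-B_1B_2(B_1-B_2)^2(y/y_L)^{B_1-B_2}>0$ --- is positive. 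Your substitution $P=K\xi^{B_1-1}$, $Q=K\xi^{B_2-1}$ avoids all differentiation: it makes both the state relation and the formula for $\pi^*_L(w)$ affine in $(P,Q)$ with $L$-independent coefficients, and confines the entire $L$-dependence to the monomial identity $Q^{B_1-1}P^{1-B_2}=K^{B_1-B_2}$, from which strict monotonicity of $P$ (hence of $\pi^*_L(w)$) in $L$ is read off by inspection. What your route buys is transparency and strictness --- it exhibits exactly how $L$ enters and yields strict increase without verifying any polynomial inequality; what the paper's route buys is that it is mechanical and requires no inspiration to find the right change of variables. All the sign checks you flag ($\kappa>0$, $P_0>0$, $P>P_0$ because $Q>0$, both exponents $B_1-1$ and $1-B_2$ positive) do go through, and your handling of the endpoint $w=0$ and of the uniqueness of the root $P_L$ is sound. (One cosmetic point: like \eqref{eq:pi}, you write the prefactor as $(\mu-r)/\sigma$ where the HJB minimizer gives $(\mu-r)/\sigma^2$; this constant is irrelevant to the monotonicity claim.)
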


\begin{proof}
Because $\pi^*_L(w)$ is independent of $L$ for $w \in [0, c/r)$, it is enough to show that $\partial{\pi^*_L(w)} / \partial L > 0$ for $w \in (-L, 0)$.  Showing this inequality is equivalent to showing that $-{\partial{} \over \partial L} y \, \hat m''(y) > 0$ for $y \in (y_0, y_L)$.  From \eqref{eq:hatm-der2}, this inequality is equivalent to
\begin{equation}
{\partial {} \over \partial{L}} \left( {c \over r} + L \right) \left[ B_1\left( {y \over y_L} \right)^{B_1 - 1} - B_2 \left( {y \over y_L} \right)^{B_2 - 1} \right] > 0, \qquad y_0 < y < y_L,
\end{equation}
or
\begin{align}\label{eq:ineq-pi}
& \left( {c \over r} + L \right) \left[ B_1 (B_1 - 1) \left( {y \over y_L} \right)^{B_1 - 2} + B_2 (1 - B_2) \left( {y \over y_L} \right)^{B_2 - 2} \right] {\partial{(y/y_L)} \over \partial{L}} \notag \\
& \qquad + \left[ B_1\left( {y \over y_L} \right)^{B_1 - 1} - B_2 \left( {y \over y_L} \right)^{B_2 - 1} \right]  > 0, \qquad y_0 < y < y_L.
\end{align}
>From $w = \hat m'(y)$ for $-L < w < 0$, we compute ${\partial{(y/y_L)} \over \partial{L}}$ by differentiating with respect to $L$ and obtain
\begin{align}\label{eq:der-yyL}
& \left( {c \over r} + L \right) (B_1 - 1)(1 - B_2) \left[ B_1\left( {y \over y_L} \right)^{B_1 - 2} - B_2 \left( {y \over y_L} \right)^{B_2 - 2} \right] {\partial{(y/y_L)} \over \partial{L}} \notag \\
& = - \left[ B_1(1 - B_2) \left( {y \over y_L} \right)^{B_1 - 1} + (B_1 - 1) B_2 \left( {y \over y_L} \right)^{B_2 - 1} \right].
\end{align}
By substituting ${\partial{(y/y_L)} \over \partial{L}}$ from \eqref{eq:der-yyL} into the inequality in \eqref{eq:ineq-pi}, we obtain the following inequality
\begin{align}
& - \left[ B_1(B_1 - 1) \left( {y \over y_L} \right)^{B_1 - B_2}  +  B_2(1 - B_2) \right] \, \left[ B_1(1 - B_2) \left( {y \over y_L} \right)^{B_1 - B_2}  + (B_1 - 1)B_2 \right]  \notag \\
& \qquad + (B_1 - 1)(1 - B_2) \left[ B_1 \left( {y \over y_L} \right)^{B_1 - B_2} - B_2 \right]^2 > 0, \qquad y_0 < y < y_L,
\end{align}
which after simplifying, is clearly true.
\end{proof}

Proposition \ref{prop:pi-incr} states that $\pi^*_L$ increases as $L$ increases.  That gives rise to the question as to how far $\pi^*_L$ increases?  That is, what is its limiting value as $L \to \infty$?  The next proposition answers that question.

\begin{proposition}\label{prop:pi-limit}
As $L \to \infty,$ $\pi^*_L(w) \to \infty$ linearly with respect to $L,$ for all $w < 0$.  
\end{proposition}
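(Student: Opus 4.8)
The plan is to reduce the statement to tracking a single scalar ratio. Recall (cf.\ \eqref{eq:lev}) that for $-L < w < 0$,
\begin{equation*}
\pi^*_L(w) = -\,\frac{\mu-r}{\sigma^2}\,\frac{m'(w)}{m''(w)} = -\,\frac{\mu-r}{\sigma^2}\, I(w)\,\hat m''\!\big(I(w)\big),
\end{equation*}
where the second equality uses the duality relations \eqref{3.3} and \eqref{3.5}, and $y = I(w)\in(y_0,y_L)$ is the solution of $w=\hat m'(y)$. Substituting $\hat m''$ from \eqref{eq:hatm-der2} and writing $\eta = \eta_L(w) := I(w)/y_L$, this becomes
\begin{equation*}
\pi^*_L(w) = \frac{\mu-r}{\sigma^2}\,\frac{(B_1-1)(1-B_2)}{B_1-B_2}\,\Big(\frac{c}{r}+L\Big)\Big[\,B_1\,\eta^{B_1-1} - B_2\,\eta^{B_2-1}\,\Big].
\end{equation*}
Since $B_1 > 1 > 0 > B_2$, the factor $(B_1-1)(1-B_2)/(B_1-B_2)$ and the bracket $B_1\eta^{B_1-1}-B_2\eta^{B_2-1}$ are both strictly positive for $\eta\in(0,1)$, so it suffices to show that $\eta_L(w)$ converges, as $L\to\infty$, to a limit in $(0,1)$; then $\pi^*_L(w)/L$ converges to a strictly positive constant, which is exactly what ``$\to\infty$ linearly in $L$'' should mean.

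Next I would identify the limit of $\eta_L(w)$. Using \eqref{eq:hatm-der}, the equation $w=\hat m'(I(w))$ reads
\begin{equation*}
\frac{c}{r} - w = \Big(\frac{c}{r}+L\Big)\,G\big(\eta_L(w)\big), \qquad G(\eta) := \frac{B_1(1-B_2)}{B_1-B_2}\,\eta^{B_1-1} + \frac{(B_1-1)B_2}{B_1-B_2}\,\eta^{B_2-1},
\end{equation*}
and $G$ is precisely the right-hand side of \eqref{eq:y0yL}: continuous and strictly increasing on $(0,1)$ with $G(\eta)\to-\infty$ as $\eta\to 0^+$ and $G(1)=1$, hence possessing a unique zero $\rho_\infty\in(0,1)$, namely $\rho_\infty^{\,B_1-B_2} = (B_1-1)(-B_2)/\big(B_1(1-B_2)\big)$. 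For fixed $w<0$ and $L>|w|$ we have $0 < G(\eta_L(w)) = (c/r-w)/(c/r+L) < 1$, and the middle quantity tends to $0$ as $L\to\infty$; by strict monotonicity and continuity of $G$ (equivalently, continuity of $G^{-1}$ at $0$) this forces $\eta_L(w)\to\rho_\infty$.

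Combining the two steps and letting $L\to\infty$ yields
\begin{equation*}
\lim_{L\to\infty}\frac{\pi^*_L(w)}{L} = \frac{\mu-r}{\sigma^2}\,\frac{(B_1-1)(1-B_2)}{B_1-B_2}\Big[\,B_1\,\rho_\infty^{B_1-1} - B_2\,\rho_\infty^{B_2-1}\,\Big] > 0,
\end{equation*}
which establishes that $\pi^*_L(w)\to\infty$ linearly in $L$ (and, as a by-product, that the asymptotic slope is the same for every $w<0$). The only genuinely delicate point is the convergence $\eta_L(w)\to\rho_\infty$; everything else is substitution and sign-checking. I would stress that the argument should be carried out throughout in terms of the ratio $\eta_L(w)$ and the function $G$ rather than $I(w)$ and $y_L$ separately: since both free boundaries depend on $L$, working with the ratio is what makes the passage to the limit clean and avoids having to analyze the individual asymptotics of $y_0$ and $y_L$.
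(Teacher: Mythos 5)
Your proof is correct and follows essentially the same route as the paper's: express $\pi^*_L(w)\propto -y\,\hat m''(y)=\bigl(\tfrac{c}{r}+L\bigr)\cdot(\text{positive function of }y/y_L)$ via the duality relations, show from $w=\hat m'(y)$ that the ratio $y/y_L$ tends to the root $\rho_\infty$ of the right-hand side of \eqref{eq:y0yL} (the paper's $z$), and conclude linear growth in $L$. Your write-up is somewhat more careful than the paper's in justifying the convergence $y/y_L\to\rho_\infty$ via the monotonicity and continuity of $G$, but this is a refinement of, not a departure from, the same argument.
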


\begin{proof}
Fix $w < 0$; then, for $L < w$, the dual value $y$ corresponding to $w$ solves $w = \hat m'(y)$ from \eqref{eq:hatm-der}, or equivalently,
\begin{equation}\label{eq:y-limit}
{c - r w \over c + r L} =  {B_1(1 - B_2) \over B_1 - B_2} \, \left( {y \over y_L} \right)^{B_1-1} + {(B_1 - 1) B_2 \over B_1 - B_2} \, \left( {y \over y_L} \right)^{B_2-1}.
\end{equation}
As $L$ gets large, the left-hand side of \eqref{eq:y-limit} goes to $0$, while the solution of $y/y_L$ goes to $z$, which solves
\begin{equation}
z^{B_1 - B_2} = - {B_1 - 1 \over B_1} \, {B_2 \over 1 - B_2},
\end{equation}
independent of $w$.

We also have
\begin{align}\label{eq:pi-limit}
\lim_{L \to \infty} \pi^*_L(w) & = - \lim_{L \to \infty}  {\mu - r \over \sigma^2} \, {m'(w) \over m''(w)} \notag  \\
& \propto - \lim_{L \to \infty} y \, \hat m''(y), \qquad \qquad \qquad \hbox{for } y = -m'(w), \notag \\
& \propto \lim_{L \to \infty} \left(   B_1 \left( {y \over y_L} \right)^{B_1 - 1} - B_2 \left( {y \over y_L} \right)^{B_2 - 1}  \right) \left( {c \over r} + L \right).
\end{align}
Here, $a \propto b$ means that there exists a {\it positive} quantity $k$ independent of $L$ such that $a = kb$.  It follows that
\begin{equation}\label{eq:pi-limit2}
\lim_{L \to \infty} \pi^*_L(w) \propto \left(   B_1 z^{B_1 - 1} - B_2 z^{B_2 - 1}  \right) \, \lim_{L \to \infty}  \left( {c \over r} + L \right) = \infty.
\end{equation}
linearly with respect to $L$.
\end{proof}

\begin{proposition}\label{prop:ML-decr}
If $L_1 < L_2,$ then $M_{L_1}(w, 0) > M_{L_2}(w, 0)$ for $w \in (-L_1, \infty)$.
\end{proposition}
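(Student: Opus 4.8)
The plan is to prove the equivalent assertion $M_{L_2}(w,0)<M_{L_1}(w,0)$ for $w\in(-L_1,c/r)$ — for $w\ge c/r$ both sides equal $0$ — by a stochastic argument: I exhibit an admissible strategy for the $L_2$--problem whose expected payoff is strictly below $M_{L_1}(w,0)$. The comparison is \emph{not} pathwise for a fixed strategy (for a given $\pi$ the two payoff functionals can be ordered either way along different paths); rather, the point is that reaching level $-L_1$ costs the full penalty $1/\la$ in the $L_1$--game, whereas in the $L_2$--game the continuation value from $-L_1$ is only $v_1:=M_{L_2}(-L_1,0)$, and $v_1<1/\la$.

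Let $\pv_{L_1}$ denote the optimal feedback of \eqref{eq:pi} for $M_{L_1}$, put $\tau:=\tau_d\wedge\tau_{c/r}\wedge\tau_{L_1}$, and $E:=\{\tau_{L_1}<\tau_d\wedge\tau_{c/r}\}$. For the $L_2$--problem, consider the strategy that follows $\pv_{L_1}$ up to time $\tau$, then on $\{\tau=\tau_{c/r}\}$ invests only in the riskless asset (so wealth stays at $c/r$), and on $E$ switches at time $\tau$ to the optimal feedback $\pv_{L_2}$ for $M_{L_2}$; this is admissible, being a concatenation of admissible strategies at a stopping time. Using \eqref{eq:newM-L}, the observation from the proof of Theorem~\ref{thm:verf} that once wealth reaches $c/r$ it can be held there with no further occupation time, and the strong Markov property together with the memorylessness and $B$--independence of $\tau_d$ (so that on $E$ the conditional expected continuation payoff equals $M_{L_2}(-L_1,0)=v_1$), the $L_2$--payoff of this strategy reduces to $\E^{w,0}_{\pv_{L_1}}[A_\tau+v_1{\bf 1}_E]$. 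Since by \eqref{eq:newM-L} and optimality of $\pv_{L_1}$ one has $M_{L_1}(w,0)=\E^{w,0}_{\pv_{L_1}}[A_\tau+\tfrac1\la{\bf 1}_E]$, this gives
\[
M_{L_2}(w,0)\ \le\ \E^{w,0}_{\pv_{L_1}}\!\big[A_\tau+v_1{\bf 1}_E\big]\ =\ M_{L_1}(w,0)-\Big(\tfrac1\la-v_1\Big)\,\P^{w,0}_{\pv_{L_1}}(E).
\]

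It then remains to check two facts. \emph{(i)} $v_1<1/\la$: by Theorem~\ref{thm:dual} and Corollary~\ref{cor:dual}, the map $w\mapsto M_{L_2}(w,0)$ is continuous and convex on $[-L_2,c/r]$ with values $1/\la$ at $-L_2$ and $0$ at $c/r$, so since $-L_1=t(-L_2)+(1-t)(c/r)$ with $t\in(0,1)$, convexity yields $M_{L_2}(-L_1,0)\le t/\la<1/\la$. \emph{(ii)} $\P^{w,0}_{\pv_{L_1}}(E)>0$ for $w\in(-L_1,c/r)$: by \eqref{eq:pi}, Lemma~\ref{lem:concave}, and \eqref{3.5}, $m=M_{L_1}(\cdot,0)$ satisfies $m'<0<m''$ on $(-L_1,c/r)\setminus\{0\}$, so $\pv_{L_1}>0$ and is bounded there; hence the $\pv_{L_1}$--controlled wealth is, on each subinterval $[-L_1,w']$ with $w'\in(w,c/r)$, a non-degenerate one-dimensional diffusion with bounded coefficients, which therefore, for some fixed $T<\infty$, exits this subinterval through $-L_1$ by time $T$ with positive probability; since $\{\tau_d>T\}$ is independent and has positive probability, $\P^{w,0}_{\pv_{L_1}}(E)>0$. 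Combining \emph{(i)}, \emph{(ii)} with the displayed inequality proves the claim.

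The main obstacle is conceptual: recognizing that the result cannot be obtained from a pathwise comparison of payoffs under a common strategy, and that one must instead recycle the optimal $L_1$--policy into the $L_2$--problem and quantify the strict deficit $\tfrac1\la-v_1$ realized whenever that policy drives wealth down to $-L_1$. The supporting technical points — a careful splicing/strong-Markov reduction (which uses exactly the memorylessness and $B$--independence of $\tau_d$ already invoked in \eqref{eq:M-L}) and the non-degeneracy in \emph{(ii)} — are routine and would simply be spelled out.
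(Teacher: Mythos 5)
Your proof is correct, but it takes a genuinely different route from the paper's. The paper works entirely on the dual side: it differentiates the closed-form expression \eqref{eq:ML-def} for $M_L(w,0)$ with respect to $L$, computes $\partial y_L/\partial L$ from the free-boundary relations \eqref{eq:y0yL} and \eqref{eq:y0}, substitutes, and reduces the claim to an explicit algebraic inequality that is then checked at $y=y_0$ via a monotonicity observation; the case $w\in[0,c/r)$ is handled through the dependence of $\beta_L$ on $y_0$. Your argument is instead a dynamic-programming comparison: splice the optimal $L_1$-policy into the $L_2$-game, note that each arrival at $-L_1$ costs $1/\la$ in the $L_1$-game but only the continuation value $v_1=M_{L_2}(-L_1,0)<1/\la$ in the $L_2$-game (strictness from convexity and the boundary values), and get strictness of the overall inequality from non-degeneracy of the optimally controlled diffusion, which makes $\P(E)>0$. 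What your approach buys: it never touches the closed-form solution, so it would survive the generalizations sketched in Section \ref{sec:exten}, and it yields the quantitative gap $\left(\tfrac1\la-v_1\right)\P^{w,0}_{\pv_{L_1}}(E)$. What it costs: it leans on the well-posedness of the feedback SDE for $\pv_{L_1}$ and on $\pv_{L_1}$ actually attaining the infimum in \eqref{eq:newM-L} --- facts asserted in Theorem \ref{thm:verf} but not proved in detail there --- whereas the paper's computation needs only the explicit formulas. One point in your favor: as stated the proposition cannot hold for $w\ge c/r$, where both sides vanish, and your restriction of the strict inequality to $(-L_1,c/r)$ is the correct reading (the paper's own proof likewise only treats $w<c/r$).
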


\begin{proof}
See the Appendix.
\end{proof}

Thanks to Proposition \ref{prop:ML-decr} we can show that the sequence of function defined in \eqref{eq:M-L} converges to the value function of minimizing lifetime occupation time. 
\begin{proposition}
For any $w \in \R$ and $a\geq 0$, we have that
\begin{equation}
\lim_{L \rightarrow \infty}M_{L}(w,a)=\inf_{\pi}E^{w,a}[A_{\tau_d}].
\end{equation}
\end{proposition}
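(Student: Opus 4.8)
The plan is to sandwich $M_L(w,a)$ between $\inf_\pi \E^{w,a}[A_{\tau_d}]$ from below --- for \emph{every} $L$ --- and, from above, by a quantity that converges to $\inf_\pi \E^{w,a}[A_{\tau_d}]$ as $L \to \infty$. Fix $w \in \R$ and $a \ge 0$. Recall from Section~\ref{sec:fin-model} that $\tau_d$ is exponentially distributed with parameter $\la$ and independent of the Brownian motion $B$; since for any admissible $\pi$ the wealth process $W^\pi$, and hence the hitting time $\tau_L$, is a functional of $B$, the random variable $\tau_L$ is independent of $\tau_d$. Note also $a \le \E^{w,a}[A^\pi_{\tau_d}] \le a + 1/\la$ for every $\pi$, so $\inf_\pi \E^{w,a}[A_{\tau_d}]$ is finite.

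For the lower bound, rewrite the bracket in $(\ref{eq:M-L})$ using ${\bf 1}_{\{Z_{\tau_d} > -L\}} = 1 - {\bf 1}_{\{Z_{\tau_d} \le -L\}}$ as
\[
A_{\tau_d} + \left( \frac{1}{\la} - (A_{\tau_d} - A_{\tau_L}) \right) {\bf 1}_{\{Z_{\tau_d} \le -L\}}.
\]
On the event $\{Z_{\tau_d} \le -L\} = \{\tau_L \le \tau_d\}$ one has $0 \le A_{\tau_d} - A_{\tau_L} \le \tau_d - \tau_L$, so the expectation of the second term is bounded below by $\E^{w,a}\big[\big(1/\la - (\tau_d - \tau_L)\big){\bf 1}_{\{\tau_L \le \tau_d\}}\big]$. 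Conditioning on the value of $\tau_L$ and using the lack-of-memory property of the exponential distribution gives $\E^{w,a}[(\tau_d - \tau_L){\bf 1}_{\{\tau_L \le \tau_d\}}] = \frac{1}{\la}\,\P^{w,a}(\tau_L < \tau_d) \le \frac{1}{\la}\,\P^{w,a}(\tau_L \le \tau_d)$, so this expectation is $\ge 0$. Hence, for every admissible $\pi$, the expectation in $(\ref{eq:M-L})$ is at least $\E^{w,a}[A_{\tau_d}] \ge \inf_{\pi'} \E^{w,a}[A^{\pi'}_{\tau_d}]$; taking the infimum over $\pi$ yields $M_L(w,a) \ge \inf_\pi \E^{w,a}[A_{\tau_d}]$ for every $L$.

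For the upper bound, fix $\eps > 0$ and choose an admissible $\pi^\eps$ with $\E^{w,a}[A^{\pi^\eps}_{\tau_d}] \le \inf_\pi \E^{w,a}[A_{\tau_d}] + \eps$. Using $\pi^\eps$ in $(\ref{eq:M-L})$ together with $A^{\pi^\eps}_{\tau_L} \le A^{\pi^\eps}_{\tau_d}$ on $\{Z^{\pi^\eps}_{\tau_d} \le -L\}$ gives
\[
M_L(w,a) \le \E^{w,a}\!\left[ A^{\pi^\eps}_{\tau_d} \right] + \frac{1}{\la}\,\P^{w,a}\!\left( Z^{\pi^\eps}_{\tau_d} \le -L \right).
\]
Because the linear SDE $(\ref{eq:wealth})$ has a non-exploding strong solution for an admissible control, the path $s \mapsto W^{\pi^\eps}_s$ is continuous on the almost-surely finite interval $[0, \tau_d]$, so $Z^{\pi^\eps}_{\tau_d} > -\infty$ almost surely, and therefore $\P^{w,a}(Z^{\pi^\eps}_{\tau_d} \le -L) \downarrow 0$ as $L \to \infty$ by continuity of measure. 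Letting $L \to \infty$ and then $\eps \downarrow 0$ yields $\limsup_{L \to \infty} M_L(w,a) \le \inf_\pi \E^{w,a}[A_{\tau_d}]$.

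Combining the two estimates,
\[
\inf_\pi \E^{w,a}[A_{\tau_d}] \le \liminf_{L \to \infty} M_L(w,a) \le \limsup_{L \to \infty} M_L(w,a) \le \inf_\pi \E^{w,a}[A_{\tau_d}],
\]
which is the assertion. (Proposition~\ref{prop:ML-decr} is not logically needed, although it already guarantees the limit exists.) The step I expect to demand the most care is the lower bound: one must check that $\tau_L$ is genuinely a functional of $B$ alone --- so that it is independent of $\tau_d$ --- and then carry out the lack-of-memory computation, including the (harmless) boundary event $\{\tau_L = \tau_d\}$; the rest is soft analysis via dominated convergence and continuity of measures.
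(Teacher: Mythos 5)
Your proof is correct, but it takes a genuinely different route from the paper's. The paper first invokes Proposition \ref{prop:ML-decr} (monotonicity of $M_L$ in $L$, itself established in the Appendix by explicit computation with the dual formulas) to identify $\lim_{L\to\infty}M_L(w,a)$ with $\inf_{L>0}M_L(w,a)$, then exchanges $\inf_{L>0}$ with $\inf_\pi$ and asserts that, for each fixed $\pi$, $\inf_{L>0}\E^{w,a}[\,\cdot\,]=\E^{w,a}[A_{\tau_d}]$. You instead sandwich $M_L$ directly: the lower bound $M_L(w,a)\ge\inf_\pi\E^{w,a}[A_{\tau_d}]$ comes from rewriting the payoff in \eqref{eq:M-L} as $A_{\tau_d}+\left(1/\la-(A_{\tau_d}-A_{\tau_L})\right){\bf 1}_{\{\tau_L\le\tau_d\}}$ and using the memoryless property of $\tau_d$ and its independence from $\tau_L$ to see that the correction term has nonnegative expectation, while the upper bound uses a near-optimal $\pi^\eps$ for the unconstrained problem together with $\P^{w,a}(Z^{\pi^\eps}_{\tau_d}\le-L)\to 0$. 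Your argument buys two things: it is entirely probabilistic and self-contained, using neither the closed-form solution of the FBP nor Proposition \ref{prop:ML-decr}, so it would carry over essentially unchanged to the generalizations of Section \ref{sec:exten}; and your lower-bound computation is precisely the content the paper leaves implicit, since the paper's final equality $\inf_{L>0}\E^{w,a}[\cdots]=\E^{w,a}[A_{\tau_d}]$ for fixed $\pi$ requires knowing that the correction term is nonnegative in expectation for every $L$ (convergence as $L\to\infty$ alone identifies the limit, not the infimum). The one caveat, which you flag yourself, is the independence of $\tau_L$ from $\tau_d$; this holds under the standing convention that admissible strategies are adapted to the filtration generated by $B$, which the paper assumes independent of $\tau_d$.
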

\begin{proof}
It follows from Corollary~\ref{cor:dual} that $M_L(w,a)=a+M_L(w,0)$. Thanks to Proposition \ref{prop:ML-decr} we know that if $L_1 < L_2,$ then $M_{L_1}(w, a) > M_{L_2}(w, a)$ for $w \in (-L_1, \infty)$. However, when $w \leq -L_1$ we have that $M_{L_1}(w,a)=M_{L_2}(w,a)$, which follows directly from \eqref{eq:M-L}. As a result
\[
\lim_{L \rightarrow \infty}M_{L}(w,a)=\inf_{L>0}M_{L}(w,a). 
\]
But then,
\[
\begin{split}
\lim_{L \rightarrow \infty}M_{L}(w,a)&=\inf_{L>0}\inf_\pi \E^{w, a} \left[ A_{\tau_d} \, {\bf 1}_{\{Z_{\tau_d} > -L \}} + \left( A_{\tau_L} + \frac{1}{\la} \right)  {\bf 1}_{\{Z_{\tau_d} \le -L \}} \right]
\\&=\inf_\pi \inf_{L>0} \E^{w, a} \left[ A_{\tau_d} \, {\bf 1}_{\{Z_{\tau_d} > -L \}} + \left( A_{\tau_L} + \frac{1}{\la} \right)  {\bf 1}_{\{Z_{\tau_d} \le -L \}} \right]
\\&=\inf_{\pi} \E^{w, a} \left[ A_{\tau_d} \right].
\end{split}
\]
\end{proof}

Proposition \ref{prop:ML-decr} gives rise to the question as to how far $M_L$ decreases as $L$ increases without bound?  The next proposition answers that question.

\begin{proposition}\label{prop:M-L-limit}
For all $w \in \R,$ $\lim_{L \to \infty} M_L(w, 0) =\inf_{\pi} \E^{w, a} \left[ A_{\tau_d} \right]= 0$.
\end{proposition}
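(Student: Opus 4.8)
The plan is to establish the claim in two directions. First, the inequality $\lim_{L\to\infty} M_L(w,0) \ge 0$ is immediate, since by Corollary~\ref{cor:dual} each $M_L(w,0) = m(w)$ is nonnegative (it is a convex, nonincreasing function taking the value $0$ at $c/r$, hence nonnegative on all of $\R$ — or more directly, $M_L$ is the infimum of expectations of nonnegative quantities). The substance of the proposition is the reverse inequality: $\lim_{L\to\infty} M_L(w,0) \le 0$, i.e., the limit cannot be strictly positive. By the previous proposition we already know $\lim_{L\to\infty} M_L(w,a) = \inf_\pi \E^{w,a}[A_{\tau_d}]$, so it suffices to exhibit, for each $w$, a sequence of admissible strategies along which $\E^{w,0}[A_{\tau_d}] \to 0$; equivalently, to show $\inf_\pi \E^{w,0}[A_{\tau_d}] = 0$.

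The natural approach is to use the explicit formula for $m$ in Corollary~\ref{cor:dual} together with the monotonicity from Proposition~\ref{prop:ML-decr}, and to track the behavior of $m(w)$ as $L\to\infty$ by passing to the limit in the dual picture. Recall from the remark after Proposition~\ref{prop:FBP} that $\hat m$ is bounded above by $u_L(y) = \min(\tfrac{c}{r}y,\ \tfrac1\la - Ly)$ and is tangent to $u_L$ at $y=0$ and at $y=y_L$. As $L\to\infty$, one checks that $y_L \to 0$: from \eqref{eq:y0yL} the ratio $y_0/y_L$ converges to the fixed constant $z$ with $z^{B_1-B_2} = -\tfrac{B_1-1}{B_1}\tfrac{B_2}{1-B_2}$ (as already computed in the proof of Proposition~\ref{prop:pi-limit}), and from \eqref{eq:y0} the value $y_0$ — hence $y_L = y_0/z$ — tends to $0$ because the bracketed quantity there blows up like $L$. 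Consequently the whole domain $[0,y_L]$ on which $\hat m$ is positive shrinks to the single point $y=0$, and since $0 \le \hat m(y) \le u_L(y) \le \tfrac{c}{r}y \le \tfrac{c}{r}y_L \to 0$, we get $\hat m \to 0$ uniformly. Dualizing: for fixed $w$, once $L$ is large enough that $w > -L$, we have $m(w) = \ML(w) = \max_{y\ge 0}(\hat m(y) - wy)$; for $w \ge 0$ this maximum is attained at some $y^* \in [0,y_L]$ and is bounded by $\sup_{[0,y_L]}\hat m \to 0$, while for $w < 0$ one uses instead the explicit second branch of \eqref{eq:M-L-via-dual} evaluated at the dual point $y = -m'(w)$, which by \eqref{eq:y-limit} satisfies $y/y_L \to z$, so $y = (y/y_L)\,y_L \to 0$ and every term in that branch — including $\tfrac1\la$, which is cancelled by the boundary structure in the limit — is controlled by a quantity proportional to $y_L$ or to $(y/y_L)^{B_1-1} - (y/y_L)^{B_2-1}$ times $y$, all vanishing. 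Hence $\lim_{L\to\infty} m(w) = 0$ for every $w$.

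The main obstacle is the case $w<0$: there $\hat m$ is not globally dominated by $\tfrac{c}{r}y$ in a way that trivially kills the $\tfrac1\la$ additive constant appearing in \eqref{eq:M-L-via-dual}, so one must be careful to see the cancellation. Concretely, write $m(w) = \ML(w)$ using \eqref{3.2}–\eqref{3.3}: $\ML(w) = \hat m(I(w)) - wI(w)$ with $I(w) = -m'(w)$ the dual point in $(y_0,y_L)$. Using $\hat m(y_L) = \tfrac1\la - Ly_L$ and $\hat m'(y_L) = -L$ together with the mean-value / Taylor estimate on $(y,y_L)$ — both endpoints of which tend to $0$ — one shows $\hat m(y) - wy \to 0$. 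Alternatively, and perhaps more cleanly, one can bypass the explicit formulas entirely: since we have already identified $\lim_{L\to\infty}M_L(w,0) = \inf_\pi \E^{w,0}[A_{\tau_d}]$, it is enough to produce strategies with vanishing expected lifetime occupation time. For any $\eps > 0$ choose $K$ large and use a strategy that, while wealth is below $c/r$, invests aggressively enough that wealth exits $(-K, c/r)$ quickly and that the exit is at $c/r$ with probability close to $1$ (this is exactly the "minimize probability of ruin at level $-K$" strategy, whose ruin probability tends to $0$ as $K\to\infty$ by the results of \cite{by3}); once wealth reaches $c/r$ it is parked and never returns below $0$. Combined with the a priori bound $A_{\tau_d} \le \tau_d$ with $\E\tau_d = 1/\la$ on the small-probability event of ruin, and with the fact that the expected time spent below $0$ before reaching $c/r$ can be made small, this yields $\E^{w,0}[A_{\tau_d}] < \eps$, proving $\inf_\pi \E^{w,0}[A_{\tau_d}] = 0$. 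I would present the dual-limit argument as the primary proof (it is shorter given the machinery already built) and remark on the probabilistic interpretation.
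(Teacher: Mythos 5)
Your primary (dual) argument is correct and is essentially the paper's own proof, written out in more detail: the paper likewise works from the bound $\hat m \le u_L$ and the collapse of the dual picture as $L \to \infty$ (namely $y_0/y_L \to z$, $y_0 \to 0$, hence $\beta_L \to 0$ and $y = -m'(w) \to 0$ for $w<0$), and your explicit estimate $0 \le m(w) = \hat m(I(w)) - w\,I(w) \le \sup_{[0,y_L]} \hat m + |w|\,y_L \to 0$ is a clean way to finish the $w<0$ case that the paper leaves implicit. One caution about the probabilistic aside you propose to append: minimizing the probability of ever hitting $-K$ does not control the time spent in $(-K,0)$. The ruin-minimizing strategy of \cite{by3} is $\frac{\mu-r}{\sigma^2}\,\frac{c/r-w}{p-1}$, independent of $K$ and bounded on compact sets, so under it there is no reason the expected occupation time below zero is small; as Proposition \ref{prop:pi-limit} and the discussion following Proposition \ref{prop:M-L-limit} indicate, driving the occupation time to zero requires investment that becomes unboundedly aggressive for negative wealth. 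I would either drop that remark or rework it around such unbounded strategies.
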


\begin{proof}
First note that from \eqref{eq:u-L} and the observation accompanying it, we have that the dual variable $y \to 0$ as $L \to \infty$ for all $w < 0$.  Recall that the dual variable $y = -m'(w)$; thus, as $L \to \infty$, it follows that $m'(w) \to 0$ for all $w < 0$.  In particular, $y_0 \to 0$, which implies that $\beta_L \to 0$ in \eqref{eq:y0betaL}.  Thus, the conclusion of the proposition follows.
\end{proof}

The minimum expected occupation time goes to 0 (and trivially convex and non-decreasing on all of $\R$) because the corresponding investment strategy grows linearly with $L$ (see Proposition~\ref{prop:pi-limit}).  In other words, as soon as wealth gets negative, the investment strategy becomes infinitely large and thereby leverages the wealth back into positive territory with probability 1.

\begin{remark}
When the wealth is negative the individual is borrowing in order to fund her consumption. It is natural to impose a higher borrowing rate, say $b>r$, for individuals who have negative wealth. However, having a higher borrowing rate for negative wealth would not change qualitative results of this paper. In particular, the Legenre transform of value function $\hat{m}$ on $(-\infty,0)$ can be calculated as in (\ref{FBP2})-\eqref{eq:y0} just by replacing $r$ with $b$, recalling that $B_1$ and $B_2$ appearing in these expressions are also functions of $r$. Now, it can be easily checked that the statements we  proved in this section for the optimal investment strategy $\pi(w)=-(\mu-r)/\sigma^2 y \hat{m}''(y)$, in which $w=\hat{m}'(y)$, and the value function remain exactly the same.

One might also try to impose a restriction on trading by setting $\pi_t=0$ when $W_t \leq 0$. In this case $\E^{w,a}[A_{\tau_d}]=1/\lambda$, for $w \leq 0$, since the individual has no chance to recover from bankruptcy. As a result, with this restriction on trading, the problem of minimizing the occupation time is equivalent to the problem of minimizing the probability of lifetime ruin; see \cite{by3}.

\end{remark}

\section{Extension}\label{sec:exten}

\subsection{More General Assumptions on $\tau_d$}

Note that we can write \eqref{eq:M-L} as 
\begin{equation}
M_L(w,a)=\inf_{\pi}\E\left[\int_0^{\tau_L}\lambda e^{-\lambda s}A_s ds+e^{-\lambda \tau_L}\left(A_{\tau_L}+{1 \over \lambda}\right) \right],
\end{equation}
and thanks to \eqref{thm:dual} we know that $M$ solves 
\begin{equation}
\lambda (M_L-a)=1_{\{w<0\}}+(rw-c)M_L'+\min_{\pi}\left[(\mu-r)\pi M_L'+{1 \over 2} \sigma^2 \pi^2 M_L''\right].
\end{equation}
When $\tau_d$ is the $n$th jump time of a Poisson process (i.e. $\tau_d$ has Erlang distribution) we expect that $M_L(w,a)=M^{(n)}(w,a)$ in which $M^{(0)}=a$ and
\begin{equation}
M^{(k)}(w,a)=\inf_{\pi}\E\left[\int_0^{\tau_L}\lambda e^{-\lambda s}M^{(m-1)}(W_s,A_s)ds+e^{-\lambda \tau_L}\left(A_{\tau_L}+{1 \over \lambda}\right) \right],
\end{equation}
for $k \in \{1, \cdots,n\}$. Moreover, we expect that for each $m \in (-L,c/r)$, $M^{(k)}$ will be a classical solution of \begin{equation}
\lambda (M^{(k)}-M^{(m-1)})=1_{\{w<0\}}+(rw-c)(M^{(k)})'+\min_{\pi}\left[(\mu-r)\pi (M^{(k)})'+{1 \over 2} \sigma^2 \pi^2 (M^{(k)})''\right],
\end{equation}
and that $M^{(k)}(w,a)=a+m^{(k)}(w)$, in which $m^{(k)}$ satisfies conditions (i)-(iv) in Theorem~\ref{thm:verf}. In our future work, we will make these statements more rigorous and analyze the effect of of increasing $n$ on the optimal investment of the retiree. 

A generalization of Erlang distribution for $\tau_d$ would be to consider the time of absorption of a continuous time discrete state space Markov chain with one absorbing state. In this case $\tau_d$ is said to have a phase-type distribution, which is dense in the set of all positive-valued distributions, that is, it can be used to approximate any positive valued distribution; see e.g. \cite{neuts}. 
We will next describe how one can solve for $M_L$ with this assumption on $\tau_d$.
Let $a_{ij}$ be the rate at which this Markov
chain jumps from state $i$ to state $j$. Then the action of
infinitesimall generator $\mathcal{A}$ of this Markov chain on a
test function $f:\{1,2,...,d\} \rightarrow \R$ is given by
\begin{equation}
\mathcal{A}f(i)=\sum_{j \neq i}a_{ij}[f(j)-f(i)], \quad i\in
\{1,2,...,d\}.
\end{equation}
If we assume that there is a family of functions $m^{(i)}(w)$, $i \in\{1,\cdots,d\}$ solving 
the coupled non-linear ordinary differential equations
\begin{equation}
\sum_{j \neq i}a_{ij} (m^{(i)}-m^{(j)})=1_{\{w<0\}}+(rw-c)(m^{(i)})'+\min_{\pi}\left[(\mu-r)\pi (m^{(i)})'+{1 \over 2} \sigma^2 \pi^2 (M^{(i)})''\right], 
\end{equation}
on $w \in (-L,c/r)$ and satisfy the conditions (i)-(iv) in Theorem~\ref{thm:verf},
then we can show that $M_L(w,a)=m^{(k)}+a$, in which $k \in \{1,2,...,d\}$ is the absorbing state. The verification of this assumption and detailed analysis of this problem will be the topic of our future work.

In general, when $\P(\tau_d>t)=\exp(-\int_{0}^{t}\lambda(s)ds)$, for some positive function $\lambda$,
then $M_L$ will also be a function of the current time and is expected to satisfy a non-linear partial differential equation
\begin{equation}
\lambda(t) (M_L-a)={\partial M_L \over \partial t}+1_{\{w<0\}}+(rw-c){\partial M_L \over \partial w}+\min_{\pi}\left[(\mu-r)\pi {\partial M_L \over \partial w}+{1 \over 2} \sigma^2 \pi^2 {\partial^2 M_L \over \partial^2 w}\right],
\end{equation}
on $w \in (-L,c/r)$ and $t \geq 0$ and satisfy the boundary conditions dictated by the conditions in  Theorem~\ref{thm:verf}. Legendre transform linearizes this partial differential equation at the expense of introducing free boundaries. This generalization will also be analyzed along with the two cases we mentioned above in our future work. 

Guided by the results on minimizing probability of ruin we expect that the value function $M_L(w,a,0)$ will not change much if the expected value of $\tau_d$ is kept the same; however the distributional assumptions on $\tau_d$ is expected to have a significant impact on the optimal investment strategy (see Figure 6.3 of \cite{mmy06}). However, approximate investment strategies based on the one we obtained in this paper might give nearly optimal strategies: An investor might estimate her (constant) hazard rate $\lambda$ each year and rebalance her portfolio using $\pi^*$ we prescribed in this paper. Then she waits without taking any further action until the next year and rebalances her portfolio given her new estimate for the hazard rate. In the problem of minimizing probability of ruin this discrete rebalancing approximation scheme was shown to be nearly optimal by \cite{my06}. A similar analysis needs to be undertaken for the problem of minimizing the occupation time.

\subsection{Minimizing the Expected Area Below Zero }
A natural extension to the problem of minimizing expected occupation time is to replace $A$ in the definition of $M_L$ in \eqref{eq:M-L} with $A^f$ defined by
\begin{equation}\label{eq:Af}
A^f_t = A^f_0 + \int_0^t f(W_s) \, ds, \quad A^f_0 = a^f \ge 0,
\end{equation}
in which $f$ is some non-decreasing function of wealth, and to replace the penalty $1/\la$ in \eqref{eq:M-L} with $f(-L)/\la$.  Specifically, the value function $M^f_L$ is defined by
\begin{equation}\label{eq:Mf-L}
M^f_L(w, a^f) = \inf_\pi \E^{w, a^f} \left[ A^f_{\tau_d} \, {\bf 1}_{\{Z_{\tau_d} > -L \}} + \left( A^f_{\tau_L} + \frac{f(-L)}{\la} \right)  {\bf 1}_{\{Z_{\tau_d} \le -L \}} \right].
\end{equation}

In \eqref{eq:M-L}, the function $f$ is given by $f(w) = {\bf 1}_{\{w < 0 \}}$.  For another example, if the goal were to minimize the expected area between the negative part of the trajectory of wealth and the $w = 0$ horizontal line, then $f$ would be given by $f(w) = -w \, {\bf 1}_{\{w < 0 \}}$.

As in Theorem \ref{thm:verf}, we expect that $M^f_L$ to be given by $M^f_L(w, a^f) = m^f(w) + a^f$, in which $m^f$ satisfies the conditions listed in that theorem with appropriate changes.  Specifically, $m^f(w) = f(-L)/\la$ for all $w \le -L$ and $m^f$ solves the following HJB equation on $(-L, c/r)$:
\begin{equation}\label{eq:HJBf}
\la m^f(w) = f(w) + (rw - c) (m^f)'(w) + \min_{\pi} \left[(\mu-r)\pi (m^f)'(w) + \frac{1}{2}\sigma^2 \pi^{2} (m^f)''(w) \right].
\end{equation}

Now, in order that the corresponding FBP be linear as in Section \ref{sec:FBP}, the function $f$ must be piecewise linear.  Moreover, to be able to get closed-form solutions for the solutions of the ODEs in the FBP, $f$ must be piecewise constant.  By approximating a given function via a step function, one can approximate the value function $M^f_L$.

\setcounter{section}{0}%
\renewcommand{\thesection}{\Alph{section}}%
\section{Appendix: Proof of Proposition~\ref{prop:ML-decr}}

From the relationship between $\beta_L$ and $y_0$ in Corollary \ref{cor:dual}, we see that it is enough to show that $\partial{M_L(w, 0)}/\partial{L} < 0$ for $w \in (-L, 0)$.  To that end, recall that
\begin{equation}\label{eq:ML-def}
M_L(w, 0) = y \left( {c \over r} - w \right) - {1 - B_2 \over B_1 - B_2} \left( {c \over r} + L \right) {y^{B_1} \over y_L^{B_1 - 1}} - {B_1 - 1 \over B_1 - B_2} \left( {c \over r} + L \right) {y^{B_2} \over y^{B_2 -1}} + {1 \over \la},
\end{equation}
in which $w = \hat m'(y)$, or equivalently,
\begin{equation}\label{eq:w-hatm-der}
{c \over r} - w = \left( {c \over r} + L \right) \left[ {B_1 (1 - B_2) \over B_1 - B_2} \left( {y \over y_L} \right)^{B_1 - 1} + {(B_1 - 1) B_2 \over B_1 - B_2} \left( {y \over y_L} \right)^{B_2 - 1} \right].
\end{equation}
Differentiate \eqref{eq:ML-def} with respect to $L$ to obtain
\begin{align}\label{eq:ML-der}
{\partial{M_L(w, 0)} \over \partial{L}} &= \left( {c \over r} - w \right) {\partial{y} \over \partial{L}} - \left[ {1 - B_2 \over B_1 - B_2} \, {y^{B_1} \over y_L^{B_1 - 1}} + {B_1 - 1 \over B_1 - B_2} \, {y^{B_2} \over y^{B_2 -1}} \right] \notag \\
& \quad -  \left[ {B_1 (1 - B_2) \over B_1 - B_2} \left( {y \over y_L} \right)^{B_1 - 1} + {(B_1 - 1) B_2 \over B_1 - B_2} \left( {y \over y_L} \right)^{B_2 - 1} \right] \left( {c \over r} + L \right) {\partial{y} \over \partial{L}} \notag \\
& \quad +  \left[ {(B_1 - 1)(1 - B_2) \over B_1 - B_2} \left( {y \over y_L} \right)^{B_1} - {(B_1 - 1)(1 - B_2) \over B_1 - B_2} \left( {y \over y_L} \right)^{B_2} \right] \left( {c \over r} + L \right) {\partial{y_L} \over \partial{L}},
\end{align}
which simplifies to the following after substituting for $c/r - w$ from \eqref{eq:w-hatm-der}:
\begin{align}
{\partial{M_L(w, 0)} \over \partial{L}} &=  - \left[ {1 - B_2 \over B_1 - B_2} \, {y^{B_1} \over y_L^{B_1 - 1}} + {B_1 - 1 \over B_1 - B_2} \, {y^{B_2} \over y^{B_2 -1}} \right] \notag \\
& \quad + \left( {c \over r} + L \right) {(B_1 - 1)(1 - B_2) \over B_1 - B_2} \left[ \left( {y \over y_L} \right)^{B_1} - \left( {y \over y_L} \right)^{B_2} \right]  {\partial{y_L} \over \partial{L}},
\end{align}
which is negative for $y_0 < y < y_L$ if and only if
\begin{align}\label{eq:ineq1}
& \left( {c \over r} + L \right) (B_1 - 1)(1 - B_2) \left[ \left( {y \over y_L} \right)^{B_1} - \left( {y \over y_L} \right)^{B_2} \right]  {\partial{y_L} \over \partial{L}} \notag \\
& \quad - y_L \left[ (1 - B_2) \, \left( {y \over y_L} \right)^{B_1} + (B_1 - 1) \, \left( {y \over y_L} \right)^{B_2} \right] < 0.
\end{align}

To prove the inequality in \eqref{eq:ineq1}, we require an expression for $\partial{y_L}/\partial{L}$.  To that end, note from \eqref{eq:y0yL}, it follows that
\begin{equation}\label{eq:simp}
\left( {y_0 \over y_L} \right)^{B_1 - 1} = {c/r \over c/r + L} \cdot {B_1 - B_2 \over (B_1 - 1)B_2} - {B_1 (1 - B_2) \over (B_1 - 1) B_2} \left( {y_0 \over y_L} \right)^{B_2 - 1}.
\end{equation}
>From \eqref{eq:y0yL}, after differentiating with respect to $L$ and simplifying, we also obtain
\begin{equation}
{ \partial{(y_0/y_L)} \over \partial{L}} = - {c/r \over (c/r + L)^2} \cdot {1 \over 1 - B_2} \cdot {y_0/y_L \over B_1 (y_0/y_L)^{B_1 - 1} - {c/r \over c/r + L}}.
\end{equation}
Next, from \eqref{eq:y0} and from \eqref{eq:simp}, it follows that
\begin{equation}\label{eq:simp1}
{1 \over y_0} = \la \left[ {B_1 - B_2 \over B_1 B_2} \cdot {c \over r} - {1 - B_2 \over B_2} \left( {c \over r} + L \right) \left( {y_0 \over y_L} \right)^{B_1 - 1} \right].
\end{equation}
After differentiating with respect to $L$ and simplifying, we obtain
\begin{equation}
{1 \over y_0^2} {\partial{y_0} \over \partial{L}} = {\la \over B_2} \left( {y_0 \over y_L} \right)^{B_1 - 1} {B_1 (1 - B_2) (y_0/ y_L)^{B_1 - 1}  - (B_1 - B_2) {c/r \over c/r + L} \over B_1 (y_0/y_L)^{B_1 - 1}  - {c/r \over c/r + L}}.
\end{equation}
It follows that
\begin{align}
{\partial{y_L} \over \partial{L}} &= {1 \over y_0/y_L} {\partial{y_0} \over \partial{L}} - y_L {\partial{(y_0/y_L)}/\partial{L} \over y_0/y_L} \notag \\
& = y_0 \, y_L \, {\la \over B_2} \left( {y_0 \over y_L} \right)^{B_1 - 1} {B_1 (1 - B_2) (y_0/ y_L)^{B_1 - 1}  - (B_1 - B_2) {c/r \over c/r + L} \over B_1 (y_0/ y_L)^{B_1 - 1}  - {c/r \over c/r + L}} \notag \\
& \quad + y_L \, {c/r \over (c/r + L)^2} \cdot {1 \over 1 - B_2} \cdot {y_0/y_L \over B_1 (y_0/y_L)^{B_1 - 1} - {c/r \over c/r + L}}.
\end{align}
Substitute for $(y_0/y_L)^{B_1 - 1}$ from \eqref{eq:simp1} into this expression and simplify to obtain
\begin{equation}\label{eq:simp2}
{\partial{y_L} \over \partial{L}} = {y_L \over c/r + L} \left( -1 + {B_2 (c/r) \over (B_1 - 1)(c/r) - B_1 \, B_2/(\la y_0) }\right).
\end{equation}

Then, from \eqref{eq:ineq1} and \eqref{eq:simp2}, $\partial{M_L(w, 0)}/\partial{L} < 0$ if and only if
\begin{align}\label{eq:ineq2}
&(B_1 - 1)(1 - B_2) \left[ \left( {y \over y_L} \right)^{B_1} - \left( {y \over y_L} \right)^{B_2} \right] \, \left[ -1 + {B_2 (c/r) \over (B_1 - 1)(c/r) - B_1 \, B_2/(\la y_0) } \right]  \notag \\
& \quad -  \left[ (1 - B_2) \, \left( {y \over y_L} \right)^{B_1} + (B_1 - 1) \, \left( {y \over y_L} \right)^{B_2} \right] < 0,  \qquad y_0 < y < y_L.
\end{align}
It is straightforward to show that the left-hand side of \eqref{eq:ineq2} is decreasing with respect to $y$; therefore, it is enough to prove \eqref{eq:ineq2} for $y = y_0$.  By substituting for $(y_0/y_L)^{B_1 - 1}$ and $(y_0/y_L)^{B_2 -1}$ in terms of $y_0$, and simplifying the result, we obtain the following inequality:
\begin{equation}
- {B_1 - B_2 \over B_1} \, {c \over r} + {B_2 \over \la y_0} < 0,
\end{equation}
which is clearly true.  Thus, we have shown that $M_L$ decreases as $L$ increases. \hfill $\square$ \\
\\

\noindent \textbf{Acknowledgements.} We thank Mario Boudalha-Ghossoub for suggesting this problem. We also would like to thank the other participants of the Department of Statistics Seminar at the University of Michigan and the two anonymous referees for their comments which helped us improve our paper.

\bibliographystyle{dcu}

\end{document}